\documentclass[preprint,12pt]{elsarticle}

%% Use the option review to obtain double line spacing
%% \documentclass[authoryear,preprint,review,12pt]{elsarticle}

%% Use the options 1p,twocolumn; 3p; 3p,twocolumn; 5p; or 5p,twocolumn
%% for a journal layout:
%% \documentclass[final,1p,times]{elsarticle}
%% \documentclass[final,1p,times,twocolumn]{elsarticle}
%% \documentclass[final,3p,times]{elsarticle}
%% \documentclass[final,3p,times,twocolumn]{elsarticle}
%% \documentclass[final,5p,times]{elsarticle}
%% \documentclass[final,5p,times,twocolumn]{elsarticle}

\usepackage{amssymb, amsthm, amsmath}
\usepackage{algorithm, algpseudocode}
\usepackage{url}

\theoremstyle{theorem}
\newtheorem{theorem}{Theorem}
\newtheorem{lemma}{Lemma}

\theoremstyle{definition}
\newtheorem*{definition}{Definition}

%% The lineno packages adds line numbers. Start line numbering with
%% \begin{linenumbers}, end it with \end{linenumbers}. Or switch it on
%% for the whole article with \linenumbers.
%% \usepackage{lineno}

\journal{Theoretical Computer Science A}

\begin{document}

\begin{frontmatter}

\title{Linear Time Approximation Schemes for Geometric Maximum Coverage\tnoteref{conference}}
\tnotetext[conference]{The conference version of this paper was published in COCOON 2015: The 21st Annual International Computing and Combinatorics Conference.}
\author[HKU]{Kai Jin\fnref{KJ}}
\author[IIIS]{Jian Li\corref{LJ}}
\author[Utah]{Haitao Wang\fnref{W}}
\author[IIIS]{Bowei Zhang\fnref{ZB}}
\author[IIIS]{Ningye Zhang\fnref{ZN}}
\fntext[KJ]{~\url{cskaijin@hku.hk}}
\cortext[LJ]{corresponding author:~\url{lijian83@mail.tsinghua.edu.cn}}
\fntext[W]{~\url{haitao.wang@usu.edu}}
\fntext[ZB]{~\url{bw-zh14@mails.tsinghua.edu.cn}}
\fntext[ZN]{~\url{zhangny12@mails.tsinghua.edu.cn}}
\address[HKU]{Department of Computer Science, University of Hong Kong, Hong Kong SAR}
\address[IIIS]{Institute for Interdisciplinary Information Sciences (IIIS), Tsinghua University, Beijing, China, 100084\fnref{SupportIIIS}}
\address[Utah]{Department of Computer Science,	Utah State University, Utah, USA, 84322\fnref{SupportUtah}}
\fntext[SupportIIIS]{J. Li, B. Zhang and N. Zhang's research was supported in part by the National Basic Research Program of China Grant 2015CB358700, 2011CBA00300, 2011CBA00301, the National Natural Science Foundation of China Grant 61202009, 61033001, 61361136003. }
\fntext[SupportUtah]{H. Wang's research was supported in part by NSF under Grant CCF-1317143.}

\begin{abstract}
We study approximation algorithms for the following geometric version of the maximum coverage problem:
Let $\mathcal{P}$ be a set of $n$ weighted points in the plane.
Let $D$ represent a planar object, such as a rectangle, or a disk.
We want to place $m$ copies of $D$ such that the sum of the weights of the points in $\mathcal{P}$ covered by these copies is maximized.
For any fixed $\varepsilon>0$, we present efficient approximation schemes
that can find a $(1-\varepsilon)$-approximation to the optimal solution.
In particular, for $m=1$ and for the special case where $D$ is 
a rectangle,
our algorithm runs in time $O(n\log (\frac{1}{\varepsilon}))$, improving 
on the previous result.
For $m>1$ and the rectangular case, our algorithm runs in
$O(\frac{n}{\varepsilon}\log (\frac{1}{\varepsilon})+\frac{m}{\varepsilon}\log m +m(\frac{1}{\varepsilon})^{O(\min(\sqrt{m},\frac{1}{\varepsilon}))})$ time.
For a more general class of shapes (including disks, polygons with $O(1)$ edges), our algorithm runs in $O(n(\frac{1}{\varepsilon})^{O(1)}+\frac{m}{\epsilon}\log m + m(\frac{1}{\varepsilon})^{O(\min(m,\frac{1}{\varepsilon^2}))})$ time.
\end{abstract}

\begin{keyword}
Maximum Coverage \sep Geometric Set Cover \sep Polynomial-Time Approximation Scheme
\end{keyword}

\end{frontmatter}

\newcommand{\calP}{\mathcal{P}}
\newcommand{\calL}{\mathcal{L}}
\newcommand{\calA}{\mathcal{A}}
\newcommand{\calU}{\mathcal{U}}

\newcommand{\MaxRS}{\mathsf{MaxCov_R}}
\newcommand{\MaxC}{\mathsf{MaxCov_C}}
\newcommand{\MaxCRS}{\mathsf{MaxCov_D}}
\newcommand{\Cover}{\mathsf{Cover}}
\newcommand{\OPT}{\mathsf{Opt}}
\newcommand{\R}{\mathbb{R}}
\newcommand{\F}{\mathsf{F}}
\newcommand{\D}{\mathsf{D}}
\newcommand{\RD}{\mathsf{RD}}
\newcommand{\tF}{\widehat{\mathsf{F}}}

\newcommand{\cell}{\mathsf{c}}
\newcommand{\upperd}{\mathsf{b}}
\newcommand{\greedy}{\mathsf{Greedy}}

\newcommand{\UA}{U^\star_\calA}
\newcommand{\UP}{U^\star_\calP}
\newcommand{\Ac}{\calA_\cell}
\newcommand{\Pc}{\calP_\cell}
\newcommand{\eat}[1]{}

\newcommand{\BlockPartion}{\textsc{MaxCovCell}}
\newcommand{\GeneralmPartion}{\textsc{MaxCovCellM}}

\newcommand{\DP}{\textsc{DP}}
\newcommand{\GREEDYP}{\textsc{Greedy}}

\newcommand{\Partition}{\textsc{Partition}}
\newcommand{\topic}[1]{\noindent{\underline{#1}:}}

%% \linenumbers

\section{Introduction}

The maximum coverage problem is a classic problem in theoretical computer science and combinatorial optimization.
In this problem, we are given a universe $\calP$ of weighted elements, a family of subsets and a number $m$.
 The goal is to select at most $m$ of these subsets such that the sum of the weights of the covered elements in $\calP$ is maximized.
It is well-known that the most natural greedy algorithm achieves
an approximation factor of $1-1/e$, which is essentially optimal (unless P=NP) \cite{hochbaum1998analysis,nemhauser1978analysis,feige1998threshold}.
However, for several geometric versions of the maximum coverage problem,
better approximation ratios can be achieved (we will mention some of such results below).
In this paper, we mainly consider the following geometric maximum coverage problem:

\begin{definition}
($\MaxRS(\calP,m)$)
Let $\calP$ be a set of $n$ points in a 2-dimensional Euclidean plane $\R^2$.
Each point  $p\in\calP$  has a given weight $w_{p} \geq 0$.
The goal of our geometric max-coverage problem (denoted as $\MaxRS(\calP,m)$) is to place $m$ $a \times b$ rectangles
such that the sum of the weights of the covered points by these rectangles is maximized.
More precisely, let $S$ be the union of $m$ rectangles we placed. Our goal is to maximize
$$
\Cover(\calP,S)=\sum_{p \in \calP\cap S}{w_{p}}.
$$
\end{definition}

We also study the same coverage problem with other shapes, instead of rectangles.
We denote the corresponding problem for circular disk as $\MaxC(\calP, m)$,
and denote the corresponding problem for general object $\D$ as $\MaxCRS(\calP, m)$.
One natural application of the geometric maximum coverage problem is the facility placement problem.
In this problem, we would like to locate a certain number of facilities to serve the maximum number of clients.
Each facility can serve a region (depending on whether the metric is $L_1$ or $L_2$,
the region is either a square or a disk).

\subsection{$m=1$}

\topic{Previous Results}
We first consider $\MaxRS(\calP,1)$, i.e., 
the maximum coverage problem with 1 rectangle.
Imai and Asano~\cite{imai1983finding}, Nandy and Bhattacharya~\cite{nandy1995unified} gave two different exact algorithms for  computing $\MaxRS(\calP,1)$, both running in time $O(n\log n)$.
It is also known that solving $\MaxRS(\calP,1)$ exactly in algebraic decision tree model requires $\Omega(n\log n)$ time \cite{ben1983lower}.
Tao et al.~\cite{tao2013approximate} proposed a randomized approximation scheme for $\MaxRS(\calP,1)$.
With probability $1-1/n$, their algorithm returns a ($1-\varepsilon$)-approximate answer in
$O(n\log (\frac{1}{\varepsilon})+n\log\log n)$ time.
In the same paper, they also studied the problem in the external memory model.

\medskip
\topic{Our Results}
For $\MaxRS(\calP,1)$ we show that there is an approximation scheme
that produces a ($1-\varepsilon$)-approximation and runs in $O(n\log (\frac{1}{\varepsilon}))$ time,
improving the result by Tao et al.~\cite{tao2013approximate}.

\subsection{General $m>1$}

\topic{Previous Results}
Both $\MaxRS(\calP, m)$ and $\MaxC(\calP, m)$ are NP-hard if $m$ is part of the input \cite{megiddo1984complexity}.
The most related work is de Berg, Cabello and Har-Peled~\cite{de2009covering}.
They mainly focused on using unit disks (i.e., $\MaxC(\calP, m)$).
They proposed a $(1-\varepsilon)$-approximation algorithm for $\MaxC(\calP,m)$
with time complexity
$O(n(m/\varepsilon)^{O(\sqrt{m})})$.
\footnote{
They were mainly interested in the case where $m$ is a constant.
So the running time becomes
$O(n(1/\varepsilon)^{O(\sqrt{m})})$
(which is the bound claimed in their paper)
and
the exponential dependency on $m$ does not look too bad for $m=O(1)$.
Since we consider the more general case,
we make the dependency on $m$ explicit.
}
We note that their algorithm can be easily extended to $\MaxRS$ with the same time complexity.

We are not aware of any explicit result for $\MaxRS(\calP,m)$ for general $m>1$.
It is known \cite{de2009covering} that the problem admits a PTAS  via the standard shifting technique \cite{hochbaum1985approximation}.
\footnote{
Hochbaum and Maass \cite{hochbaum1985approximation}
obtained a PTAS for the problem of covering given points
with a minimal number of rectangles. Their algorithm can be easily modified into a PTAS
for $\MaxRS(\calP,m)$ with running time $n^{O(1/\epsilon)}$.
}

\medskip
\topic{Our Results}
Our main result is an approximation scheme for $\MaxRS(\calP,m)$ which runs in time
$$
O\left( \frac{n}{\varepsilon}\log \frac{1}{\varepsilon}+
        \frac{m}{\varepsilon}\log m + m\left(\frac{1}{\varepsilon}\right)^{\Delta_1}\right),
$$
where $\Delta_1=O(\min(\sqrt{m},\frac{1}{\varepsilon}))$.
Our algorithm can also be extended to other shapes subject to some common assumptions, including disks, polygons with $O(1)$ edges (see Section~\ref{sec:othershape} for the assumptions).
The running time of our algorithm is
$$
O\left(n\Bigl(\frac{1}{\varepsilon}\Bigr)^{O(1)}+
        \frac{m}{\varepsilon}\log m + m\Bigl(\frac{1}{\varepsilon}\Bigr)^{\Delta_2}\right),
$$
where $\Delta_2=O(\min(m,\frac{1}{\varepsilon^2}))$.

Following the convention of approximation algorithms,
$\varepsilon$ is a fixed constant.
Hence, the second and last term is essentially $O(m\log m)$ and
the overall running time is essentially linear $O(n)$ 
(if $m=O(n/\log n)$).

Our algorithm follows the standard shifting technique~\cite{hochbaum1985approximation},
which reduces the problem to a smaller problem restricted in a constant size cell.
The same technique is also used in de Berg et al.~\cite{de2009covering}.
They proceeded by first solving the problem exactly in each cell, and then
use dynamic programming to find the optimal
allocation for all cells.
\footnote{
	In fact, their dynamic programming runs in time at least $\Omega(m^2)$.
	Since they focused on constant $m$, this term is negligible in their running time.
	But if $m>\sqrt{n}$, the term can not be ignored and may become the dominating term.
	}

Our improvement comes from another two simple yet useful ideas.
First, we apply the shifting technique in a different way and make the side length of grids much smaller ($O(\frac{1}{\varepsilon})$, instead of $O(m)$ in de Berg et al.'s algorithm~\cite{de2009covering}). Second, we solve the dynamic program approximately. In fact, we show that
a simple greedy strategy (along with some additional observations) can be used for this purpose, which allows us to save another $O(m)$ term.

\subsection{Other Related Work}

There are many different variants for this problem.
We mention some most related problems here.

Barequet et al. \cite{barequet1997translating}, Dickerson and Scharstein \cite{dickerson1998optimal} studied the max-enclosing polygon problem which aims to find a position of a given polygon to cover maximum number of points.
This is the same as $\MaxRS(\calP,1)$ if a polygon is a rectangle. Imai et al.~\cite{imai1983finding} gave an optimal algorithm for the max-enclosing rectangle problem with time complexity $O(n\log n)$.

$\MaxC(\calP, m)$ was introduced by Drezner~\cite{drezner1981note}.
Chazelle and Lee~\cite{chazelle1986circle} gave an $O(n^{2})$-time exact algorithm for the problem $\MaxC(\calP,1)$.
A Monte-Carlo $(1-\varepsilon)$-approximation algorithm for $\MaxC(\calP,1)$
was shown in \cite{agarwal2002translating}, where $\calP$ is
an unweighted point set.
Aronov and Har-Peled~\cite{aronov2008approximating} showed that for unweighted point sets an $O(n\varepsilon^{-2}\log n)$ time Monte-Carlo $(1-\varepsilon)$-approximation algorithm exists,
and also provided some results for other shapes. de Berg et al.~\cite{de2009covering} provided an $O(n\varepsilon^{-3})$ time $(1-\varepsilon)$-approximation algorithm.

For $m>1$, $\MaxC(\calP,m)$ has only a few results.
For $m=2$, Cabello et al.~\cite{cabello2008covering} gave an exact algorithm for
this problem when the two disks are disjoint in $O(n^{8/3}\log^{2}n)$ time. de Berg et al.~\cite{de2009covering} gave $(1-\varepsilon)$-approximation algorithms
that run in $O(n\varepsilon^{-4m+4}\log^{2m-1}{(1/\varepsilon)})$ time for $m>3$ and in $O(n\varepsilon^{-6m+6}\log{(1/\varepsilon)})$ time for $m=2,3$.

The dual of the maximum coverage problem is the classical set
cover problem. The geometric set cover problem
has enjoyed extensive study in the past two decades.
The literature is too vast to list exhaustively here.
See e.g., \cite{Bronnimann,Clarkson,even2005hitting,mustafa2009ptas,varadarajan2010weighted,Chan2012,li2015ptas} and the references therein.

\paragraph*{Outline}
We consider the rectangular case first, and then show the extension to general shapes in the last section.

\section{Preliminaries} % Major section
\label{sec:prel}

We first define some notations and mention some results that are needed in our algorithm.
Denote by $G_{\delta}(a,b)$ the square grid with mesh size $\delta$
such that the vertical and horizontal lines  are defined as follows
\begin{eqnarray*}
G_{\delta}(a,b)=& \left\{(x,y)\in\mathbb{R}^{2}\mid y=b+k\cdot\delta,k\in \mathbb{Z}\right\}
\\ & \cup\left\{(x,y)\in\mathbb{R}^{2}\mid x=a+k\cdot\delta,k\in \mathbb{Z}\right\}.
\end{eqnarray*}
Given $G_{\delta}(a,b)$ and a point $p=(x,y)$, we call the integer pair
$(\lfloor x/\delta\rfloor,\lfloor y/\delta\rfloor)$ the {\em index} of $p$
(the index of the cell in which $p$ lies in).

%------------------------------------------------
\medskip
\topic{Perfect Hashing}
Dietzfetbinger et al. \cite{dietzfelbinger1997reliable}
shows that if each basic algebraic operation (including $\{+,-,\times,\div,\log_2,\exp_2\}$) can be done in constant time,
we can get a perfect hash family so that each insertion and membership query takes $O(1)$ expected time.
In particular, using this hashing scheme,
we can hash the indices of all points, so that we can obtain the list of all non-empty cells in $O(n)$ expected time.
Moreover, for any non-empty cell, we can retrieve all points lies in it in time linear in the number of such points.

\medskip
\topic{Linear Time Weighted Median and Selection} % Sub-section
It is well known that finding the weighted median for an array of numbers can be done in deterministic
worst-case linear time.
The setting is as follows:
Given $n$ distinct elements $x_{1},x_{2},...,x_{n}$ with positive weights $w_{1},w_{2},...,w_{n}$.
Let $w=\sum_{i=1}^{n}w_{i}$.
The {\em weighted median} is the element $x_{k}$
satisfying $\sum_{x_{i}<x_{k}}w_{i}<w/2$ and $\sum_{x_{i}>x_{k}}w_{i}\leq w/2$.
Finding the $k$-th smallest elements for any array can also be done in deterministic
worst-case linear time. See e.g., \cite{clrs}.

\medskip
\topic{An Exact Algorithm for $\MaxRS(\calP, 1)$} As we mentioned, Nandy and Bhattacharya \cite{nandy1995unified} provided an $O(n\log n)$ exact algorithm for the $\MaxRS(\calP,1)$ problem. We use this algorithm as a subroutine in our algorithm.
%Roughly speaking, We can think about the area where a rectangle can cover a fixed point. This area it self is a rectangle.
%So we can use a horizontal line to sweep the plane. We maintain a segment tree on the line, adding and remove a point to the segment tree according to the vertical order of the possible covering area. Then the maximum value we ever get in the segment tree is exactly the answer we want. We are not going to present the algorithm precisely here, see[2] for more detail.
%------------------------------------------------
%----------------------------------------------------------------------------------------
%	MAJOR SECTION X - TEMPLATE - UNCOMMENT AND FILL IN
%----------------------------------------------------------------------------------------

\section{A Linear Time Algorithm for $\MaxRS(\calP,1)$}
\label{sec:m1}

\textbf{Notations:} Without loss of generality, we can assume that $a=b=1$, i.e., all the rectangles are $1\times1$ squares,
(by properly scaling the input).
%We only consider the case of $a=b=1$,i.e. unit squares for the rest of the paper.
We also assume that all points are in general positions.
In particular, all coordinates of all points are distinct.
For a unit square $r$, we use $w(r)$
to denote the sum of the weights of the points covered by $r$.
We say a unit square $r$ is located at $(x,y)$ if the top-left corner of $r$ is $(x,y)$.

Now we present our approximation algorithm for $\MaxRS(\calP,1)$.

\subsection{Grid Shifting} % Sub-section
Recall the definition of a grid $G_{\delta}(a,b)$ (in Section~\ref{sec:prel}).
Consider the following four grids: $G_{2}(0,0)$, $G_{2}(0,1)$, $G_{2}(1,0)$, $G_{2}(1,1)$ with $\delta = 2$.
We can easily see that for any unit square $r$, there exists one of the above grids that does not intersect $r$
(i.e., $r$ is inside some cell of the grid).
This is also the case for the optimal solution.
%Thus, we get the following shifting algorithm.

Now, we describe the overall framework, which is similar to that in~\cite{tao2013approximate}.
Our algorithm differs in several details.
\BlockPartion($\cell$) is a subroutine that takes a $2\times2$ cell $\cell$ as input and returns a unit square $r$
that is a (1-$\varepsilon$)-approximate solution if the problem is restricted to cell $\cell$.
We present the details of \BlockPartion \ in the next subsection.

\begin{algorithm}[h]
  \caption{$\MaxRS(\calP, 1)$}
  \begin{algorithmic}[]
  \State $w_{\max}\leftarrow0$
  \For {each $G\in\{G_{2}(0,0),G_{2}(0,1),G_{2}(1,0),G_{2}(1,1)\}$}
  \State Use perfect hashing to find all the non-empty cells of $G$.
     \For {each non-empty cell $\cell$ of $G$}
     \State $r\leftarrow$ \BlockPartion($\cell$).
     \State {\bf If} $w(r)> w_{\max}$, {\bf then} $w_{\max}\leftarrow w(r)$ and $r_{\max}\leftarrow r$.
     \EndFor;
  \EndFor;
  \State \textbf{return} $r_{\max}$;
  \end{algorithmic}
\label{algo:mainalgo1}
\end{algorithm}

As we argued above, there exists a grid $G$ such that the optimal solution is inside some cell $\cell^\star \in G$.
Therefore, $\BlockPartion(\cell^\star)$ should return a (1-$\varepsilon$)-approximation for the original problem $\MaxRS(\calP,1)$.

\subsection{\BlockPartion} % Sub-section
\label{subsec:partition}

In this section, we present the details of the subroutine \BlockPartion.
Now we are dealing with the problem restricted to a single $2\times2$ cell $\cell$.
%We first find out the number of points inside this block.
Denote the number of point in $\cell$ by $n_{\cell}$, and the sum of the weights of points
in $\cell$ by $W_{\cell}$.
We distinguish two cases, depending on
whether $n_{\cell}$ is larger or smaller than $\left(\frac{1}{\varepsilon}\right)^{2}$.
If $n_{\cell}<\left(\frac{1}{\varepsilon}\right)^{2}$, we simply apply the $O(n\log n)$ time exact algorithm. \cite{nandy1995unified}

The other case requires more work.
In this case, we further partition cell $\cell$ into many smaller cells.
First, we need the following simple lemma.

\begin{lemma}
\label{lm:partition}
Given $n$ points in $\R^2$ with positive weights $w_{1},w_{2},...,w_{n},$ $\sum_{i=1}^{n}w_{i}=w$.
Assume that $x_{1},x_{2},...,x_{n}$ are their distinct $x$-coordinates.
We are also given a value $w_{d}$ such that $\max(w_{1},w_{2},...,w_{n})\leq w_{d} \leq w$,
Then, we can find at most $2w/w_{d}$ vertical lines such that the sum of the weights of points strictly between (we do not count the points on these lines) any two adjacent lines is at most $w_{d}$
in time $O(n\log(w/w_{d}))$.
\end{lemma}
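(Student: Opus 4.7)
The plan is a recursive divide-and-conquer driven by the weighted median primitive from Section~\ref{sec:prel}. Define $\Partition(S)$ on a subset $S$ of the input points with total weight $W(S)$ as follows: if $W(S)\leq w_d$, return; otherwise compute the weighted median $x_k$ of $S$ in linear time, emit the vertical line $x=x_k$, and recurse on the points of $S$ strictly to the left and strictly to the right of $x_k$. Because the $x$-coordinates are distinct, the point realizing $x_k$ lies exactly on the emitted line and is excluded from both recursive subproblems.

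Correctness is immediate: reading the recursion tree from left to right, consecutive leaves correspond to consecutive vertical strips bounded by two adjacent output lines (or by $\pm\infty$ at the ends), and every leaf subproblem has total weight $\leq w_d$ by the termination test. The main obstacle, and the only nontrivial part of the argument, is to bound the number of emitted lines by $2w/w_d$. Here the weighted median is essential: by definition the points strictly to its left have weight $<W(S)/2$ and those strictly to its right have weight $\leq W(S)/2$, so both child subproblems have weight at most $W(S)/2$. Consequently a subproblem at recursion depth $i$ has weight at most $w/2^i$, and once $i\geq \lceil\log_2(w/w_d)\rceil$ every such subproblem satisfies the termination test and is a leaf. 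The recursion tree is therefore a binary tree of depth at most $\lceil\log_2(w/w_d)\rceil$; it has at most $2^{\lceil\log_2(w/w_d)\rceil}\leq 2w/w_d$ leaves and hence at most $2w/w_d-1$ internal nodes. Since each internal node emits exactly one line, the line count is bounded by $2w/w_d$.

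For the running time, a weighted median of a set of size $k$ costs $O(k)$ time, and at every fixed depth of the recursion the active subproblems form a disjoint partition of a subset of the input, so the total work at each depth is $O(n)$. Summing over the $O(\log(w/w_d))$ depths yields the desired $O(n\log(w/w_d))$ bound.
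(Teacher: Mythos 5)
Your proposal is correct and follows essentially the same route as the paper: recursively split at the weighted median, observe that each child retains at most half the weight so the recursion depth is $\lceil\log_2(w/w_d)\rceil$, bound the number of emitted lines by the number of nodes of the resulting binary tree ($\leq 2^{\lceil\log_2(w/w_d)\rceil}\leq 2w/w_d$), and charge $O(n)$ work per level of the recursion. Your accounting via internal nodes versus leaves is marginally more careful than the paper's, but the argument is the same.
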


\begin{algorithm}[h]
  \caption{\Partition($\{x_{1},x_{2},...,x_{n}\}$)}
  \begin{algorithmic}[]
  \State Find the weighted median $x_{k}$ (w.r.t. $w$-weight);
  \State $\calL=\calL\cup\{x_{k}\}$;
  \State Generate $S=\{x_{i}\mid w_{i}<x_{k}\}$, $L=\{x_{i}\mid w_{i}>x_{k}\}$;
  \State If the sum of the weights of the points in $S$ is lager than $w_{d}$, run \Partition(S);
  \State If the sum of the weights of the points in $L$ is lager than $w_{d}$, run \Partition(L);
  \end{algorithmic}
  \label{algo:partition}
\end{algorithm}

\begin{proof}
See Algorithm~\ref{algo:partition}.
In this algorithm, we apply the weighted median algorithm recursively.
Initially we have a global variable $\calL=\emptyset$, which upon termination is
the set of $x$-coordinates of the selected vertical lines.
Each time we find the weighted median $x_{k}$ and separate the point with the vertical line $x=x_{k}$,
which we add into $\calL$.
The sum of the weights of points in either side is at most half of the sum of the weights of all the points.
Hence, the depth of the recursion is at most $\lceil\log(w/w_{d})\rceil$.
Thus, the size of $\calL$ is at most $2^{\lceil\log(w/w_{d})\rceil}\leq 2w/w_{d}$,
and the running time is $O(n\log(w/w_{d}))$.
\end{proof}

We describe how to partition cell $\cell$ into smaller cells.
First, we partition $\cell$ with some vertical lines.
Let $\calL_v$ denote a set of vertical lines. Initially, $\calL_v=\emptyset$.
Let $w_{d}=\frac{\varepsilon\cdot W_{\cell}}{16}$.
We find all the points whose weights are at least $w_d$.
For each such point, the vertical line that passes through this point is added to $\calL_v$.
Then, we apply Algorithm~\ref{algo:partition}
to all the points with weights less than $w_d$.
Next, we add a set $\calL_h$ of horizontal lines in exactly the same way.

\begin{lemma}
The sum of the weights of points strictly between any two adjacent lines in $\calL_v$ is at most $w_d=\frac{\varepsilon\cdot W_{\cell}}{16}$.
The number of vertical lines in $\calL_v$ is at most $\frac{32}{\varepsilon}$.
Both statements hold for $\calL_h$ as well.
\end{lemma}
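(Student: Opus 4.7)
The plan is to decompose $\calL_v$ according to its two-step construction and handle each piece separately. Write $\calL_v = \calL_v^H \cup \calL_v^P$, where $\calL_v^H$ is the set of vertical lines through the ``heavy'' points (those of weight at least $w_d$) and $\calL_v^P$ is the output of Algorithm~\ref{algo:partition} applied to the remaining ``light'' points with budget $w_d$. Note that Lemma~\ref{lm:partition} is applicable in the second step because every light point has weight strictly less than $w_d$, so the hypothesis $\max_i w_i \le w_d$ is satisfied.

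For the weight-between-adjacent-lines claim, I would start with the easy observation that no heavy point can lie strictly between two adjacent lines of $\calL_v$: its own vertical line is in $\calL_v^H \subseteq \calL_v$ and would then separate the two supposedly adjacent lines, a contradiction. Hence only light points contribute to the weight strictly between adjacent lines. Next, because $\calL_v^P \subseteq \calL_v$, any open slab between two adjacent lines of $\calL_v$ is contained in some open slab between two consecutive lines of $\calL_v^P$ (treating the two vertical sides of $\cell$ as implicit sentinels for the leftmost and rightmost slabs). Lemma~\ref{lm:partition} bounds the total light weight in that larger slab by $w_d$, which transfers immediately to the smaller slab.

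The cardinality bound then reduces to summing two easy estimates. Each heavy point has weight at least $w_d$, so $|\calL_v^H| \le w_H/w_d$, where $w_H$ is the total weight of heavy points in $\cell$. Lemma~\ref{lm:partition} directly yields $|\calL_v^P| \le 2 w_L/w_d$, where $w_L$ is the total weight of light points. Adding,
$$ |\calL_v| \;\le\; \frac{w_H + 2 w_L}{w_d} \;\le\; \frac{2 W_{\cell}}{w_d} \;=\; \frac{32}{\varepsilon}. $$
The $\calL_h$ case is identical after swapping the roles of the two axes.

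The only subtle point, and the thing I would be careful to spell out, is that Lemma~\ref{lm:partition} as stated only talks about internal pairs of adjacent Partition lines; the extreme slabs (before the first and after the last line) require that the two vertical sides of $\cell$ be treated as outer sentinels. This is legitimate because Algorithm~\ref{algo:partition} keeps recursing on any side whose residual weight exceeds $w_d$, so the extreme slabs automatically inherit the same $w_d$ bound. Beyond this bookkeeping, the proof has no real obstacle.
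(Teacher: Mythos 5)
Your proposal is correct and follows essentially the same route as the paper: the cardinality bound is exactly the paper's computation $W_1/w_d + 2W_2/w_d \le 2W_{\cell}/w_d = 32/\varepsilon$ with $W_1, W_2$ the heavy and light weight, and the first claim (which the paper dismisses as straightforward) is handled by the same heavy/light split. Your extra care about the extreme slabs and the sentinel boundaries is a legitimate and welcome bit of bookkeeping, but it does not change the argument.
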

\begin{proof}
%Obviously, the process of our algorithm guarantees the sum of the weights of points
%between any two adjacent lines is at most $\frac{\varepsilon\cdot W{c}}{16}$.
The first statement is straightforward from the description of the algorithm.
We only need to prove the upper bound of the number of the vertical lines.
Assume the sum of the weights of those points considered in the first (resp. second) step is $W_{1}$(resp. $W_{2}$), $W_{1}+W_{2}=W_{\cell}$.
The number of vertical lines in $\calL_v$ is at most
$$
W_{1}/\left(\frac{\varepsilon\cdot W_{\cell}}{16}\right)+
2W_{2}/\left(\frac{\varepsilon\cdot W_{\cell}}{16}\right)
\leq \frac{32}{\varepsilon}.
$$
The first term is due to the fact that the weight of each point we found in the first step has weight at least $\frac{\varepsilon\cdot W_{\cell}}{16}$, and
the second term directly follows from Lemma~\ref{lm:partition}.
\end{proof}

We add both vertical boundaries of cell $\cell$ into $\calL_v$
and both horizontal boundaries of cell $\cell$ into $\calL_h$.
Now $\calL=\calL_v\cup \calL_h$
forms a grid of size at most $(\frac{32}{\varepsilon}+2) \times (\frac{32}{\varepsilon}+2)$.
Assume
$\calL=\{(x,y)\in\mathbb{R}^{2} \mid y=y_{j},j\in \{1,...,v\}\}\cup\{(x,y)\in\mathbb{R}^{2}\mid x=x_{i},i\in \{1,...,u\}\}$,
with both $\{y_{i}\}$ and $\{x_{i}\}$ are sorted.
$\calL$ partitions $\cell$ into {\em small cells}.
The final step of our algorithm is simply enumerating all the unit squares located at $(x_{i},y_{j}),i\in \{1,...,u\},j\in \{1,...,v\}$,
and return the one with the maximum coverage.
However, computing the coverage exactly for all these unit squares is expensive.
Instead, we only calculate the weight of these unit square approximately as follows.
For each unit square $r$, we only count the weight of points that are in some small cell fully covered by $r$.
Now, we show this can be done in $O\left(n_{\cell}\log \left(\frac{1}{\varepsilon}\right)+\left(\frac{1}{\varepsilon}\right)^{2}\right)$ time.

After sorting $\{y_{i}\}$ and $\{x_{i}\}$, we can use binary search to identify which small cell each point lies in.
So we can calculate the sum of the weights of points at the interior, edges or corners of all small cells
in  $O(n_{\cell}\log \left(\frac{1}{\varepsilon}\right))$ times.

Thus searching the unit square with the maximum (approximate) coverage
can be done with a standard incremental algorithm in $O\left(\frac{1}{\varepsilon}\right)^{2}$ time.

Putting everything together,
we conclude that if $n_{\cell}\geq\left(\frac{1}{\varepsilon}\right)^{2}$,
the running time of \BlockPartion($\cell$)\ is
$O\left(n_c\log \left(\frac{1}{\varepsilon}\right)+\left(\frac{1}{\varepsilon}\right)^{2}\right).$

\begin{lemma}
\label{singlecell}
The subroutine \BlockPartion($\cell$) returns a (1-$\varepsilon$)-approximation to $\MaxRS(\calP_\cell, 1)$, where
$\calP_\cell$ is the set of points in $\calP$ that lies in $\cell$.
\end{lemma}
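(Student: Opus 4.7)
I would split on the branch of \BlockPartion{} executed. If $n_\cell<(1/\varepsilon)^2$ the routine invokes the Nandy--Bhattacharya exact algorithm on $\calP_\cell$ and so is trivially a $(1-\varepsilon)$-approximation; all the work is in the branch $n_\cell\geq(1/\varepsilon)^2$. Let $r^*$ be an optimal unit square for $\MaxRS(\calP_\cell,1)$ with top-left $(x^*,y^*)$, set $\OPT=w(r^*)$, and write $T(r)$ for the ``approximate weight'' that the algorithm computes for a candidate $r$, i.e.\ the total weight of points lying in those small cells of $\calL$ that are fully covered by $r$. Since the algorithm returns the grid-vertex candidate maximizing $T$, it suffices to exhibit one candidate $r_{\mathrm{alg}}$ with $T(r_{\mathrm{alg}})\geq(1-\varepsilon)\OPT$.

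The plan rests on two ingredients. First, $\OPT\geq W_\cell/4$: tile $\cell$ by four axis-aligned unit squares; their covered weights sum to at least $W_\cell$, so by pigeonhole some placement has weight $\geq W_\cell/4$. Second, take $r_{\mathrm{alg}}$ to be the unit square at $(x_i,y_j)$, where $x_i$ is the smallest line of $\calL_v$ with $x_i\geq x^*$ and $y_j$ the smallest line of $\calL_h$ with $y_j\geq y^*$. I would then verify (a)~$T(r_{\mathrm{alg}})\geq T(r^*)$ by monotonicity, and (b)~$w(r^*)-T(r^*)\leq 4w_d=\varepsilon W_\cell/4$ by a boundary-strip argument. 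For (a), a fully-covered small cell $[x_a,x_{a+1}]\times[y_b,y_{b+1}]$ of $r^*$ has $x^*\leq x_a$ and $x_{a+1}\leq x^*+1$; since $x_i$ is the smallest grid line $\geq x^*$, $x_i\leq x_a$, and since $x_i\geq x^*$, $x_{a+1}\leq x^*+1\leq x_i+1$, symmetrically in $y$. For (b), the region of $r^*$ not contained in any fully covered small cell lies in at most four boundary strips of $r^*$, each strictly between two adjacent lines of $\calL_v$ or $\calL_h$ and hence of weight $\leq w_d=\varepsilon W_\cell/16$. Combining, $T(r_{\mathrm{alg}})\geq T(r^*)\geq\OPT-\varepsilon W_\cell/4\geq(1-\varepsilon)\OPT$.

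The main obstacle I expect is the degenerate geometry when $r^*$ contains very few lines of $\calL$, so that the ``inner'' grid-aligned rectangle of $r^*$ collapses and claim (b) or the monotonicity candidate above may fail to capture a heavy column. If $r^*$ contains no vertical grid line strictly inside $[x^*,x^*+1]$, then all of $r^*$ lies between two adjacent vertical grid lines, forcing $w(r^*)\leq w_d<W_\cell/4\leq\OPT$, a contradiction, so this sub-case cannot arise. The harder sub-case, exactly one interior grid line in some direction, is handled by choosing the candidate among the $O(1)$ grid vertices neighbouring $r^*$ more carefully, exploiting the fact that adjacent grid-line spacings inside the $2\times 2$ cell $\cell$ sum to at most $2$, so at least one neighbouring grid vertex produces a candidate with a fully-covered small cell narrow enough to capture the single heavy column of $r^*$ while keeping the error within the boundary-strip budget. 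A secondary point is a consistent tie-breaking rule for points lying on grid lines so that each is assigned unambiguously to a small cell when forming $T(\cdot)$; under the general-position assumption this is routine.
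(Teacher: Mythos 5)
Your overall strategy is the same as the paper's: lower-bound $\OPT$ by $W_{\cell}/4$, replace the optimal square by a candidate anchored at a nearby grid vertex, and charge the loss to four strips of weight at most $w_d=\varepsilon W_{\cell}/16$ each. Your two-step split into (a) monotonicity of the family of fully covered small cells and (b) a boundary-strip bound for $r^*$ itself is a clean way to organize that charge, and step (b) is fine: any small cell that meets $r^*$ but is not contained in it must cross one of the four edges of $r^*$, so the uncounted weight lies in the two vertical strips of $\calL_v$ containing $x^*$ and $x^*+1$ and the two horizontal strips of $\calL_h$ containing $y^*$ and $y^*-1$.

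The gap is in step (a), in the words ``symmetrically in $y$''. Under the paper's convention a square located at $(x,y)$ occupies $[x,x+1]\times[y-1,y]$: it extends in the $+x$ direction but in the $-y$ direction from its anchor, so the two coordinates are \emph{not} symmetric. Rounding $x^*$ up to the nearest grid line $x_i$ is the right move (no line of $\calL_v$ lies in $[x^*,x_i)$, so no fully covered cell of $r^*$ is lost at the left edge, and the right edge only moves outward), but rounding $y^*$ up to the smallest $y_j\ge y^*$ shifts the whole candidate upward, and its bottom edge $y_j-1$ can jump over lines of $\calL_h$ lying in $[y^*-1,\,y_j-1)$; the interval $[y^*,y_j)$ contains no grid line, but the interval one unit below it may contain many. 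Every fully covered row of $r^*$ whose lower line falls there is lost, and the total weight lost this way is not bounded by $O(w_d)$ --- it can be a constant fraction of $W_{\cell}$ --- so $T(r_{\mathrm{alg}})\ge T(r^*)$ is simply false for your candidate. The fix is one sign: take $y_j$ to be the \emph{largest} line of $\calL_h$ with $y_j\le y^*$, i.e., pick the corner of the small cell containing $(x^*,y^*)$ that points \emph{into} $r^*$ in both coordinates; then (a) holds by the argument you gave for $x$. With that correction your last paragraph becomes unnecessary: the chain $T(r_{\mathrm{alg}})\ge T(r^*)\ge w(r^*)-4w_d\ge \OPT-\varepsilon\OPT$ already disposes of the configurations with zero or one interior grid line (there $T(r^*)$ may be $0$, but then $w(r^*)\le 4w_d<\OPT$ contradicts optimality), so no separate case analysis or ``more careful'' choice among neighbouring vertices is needed.
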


\begin{proof}
The case $n_{\cell}<\left(\frac{1}{\varepsilon}\right)^{2}$ is trivial since we apply the exact algorithm.
So we only need to prove the case of $n_{\cell}\geq\left(\frac{1}{\varepsilon}\right)^{2}$.

\begin{figure}[h]
\centering
\includegraphics[width=0.35\textwidth]{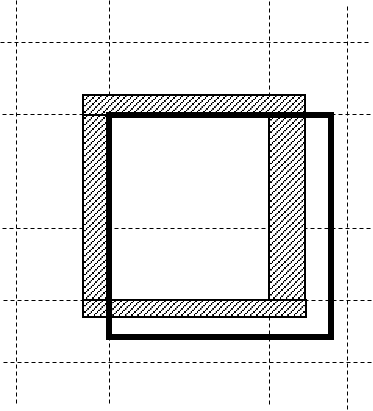}
\caption{Proof of Lemma~\ref{singlecell}.}
\label{fig:1}
\end{figure}

Suppose the optimal unit square is $r$. Denote by $\OPT$ the weight of the optimal solution.
The size of $\cell$ is $2\times 2$, so we can use $4$ unit squares to cover the entire cell.
Therefore, $\OPT\geq \frac{W_{\cell}}{4}$.
Suppose $r$ is located at a point $p$, which is in the strict interior of a small cell $B$ separated by $\calL$.
\footnote{If $p$ lies on the boundary of $B$, the same argument still works.}
Suppose the index of $B$ is $(i,j)$.
We compare the weight of $r$ with $I(i,j)$ (which is the approximate weight of the unit square located at the top-left corner of $B$).
See Figure~\ref{fig:1}.
By the rule of our partition, the weight difference is at most $4$ times the maximum possible weight of points between two adjacent lines in $\calL$.
So $I(i,j)\geq \OPT-4\cdot \frac{\varepsilon\cdot W_{\cell}}{16}\geq (1-\varepsilon)\OPT$.
This proves the approximation guarantee of the subroutine.
\end{proof}

We conclude the main result of this section with the following theorem.

\begin{theorem}
	\label{thm:main1}
Algorithm~\ref{algo:mainalgo1} returns a (1-$\varepsilon$)-approximation to
$\MaxRS(\calP, 1)$
in $O(n\log \left(\frac{1}{\varepsilon}\right))$ time.
\end{theorem}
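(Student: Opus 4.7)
The plan is to combine two ingredients already developed in this section: the shifting observation at the start of Section~\ref{sec:m1} that for every unit square $r$ at least one of the four grids $G_2(a,b)$ with $a,b\in\{0,1\}$ avoids $r$, and Lemma~\ref{singlecell} from Section~\ref{subsec:partition} which guarantees a $(1-\varepsilon)$-approximation inside any single $2\times 2$ cell. Correctness will follow immediately from these two facts; the real work is in the running-time accounting.

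For correctness, let $r^\star$ be an optimal unit square and set $\OPT=w(r^\star)$. The shifting observation picks out a grid $G$ and a cell $\cell^\star\in G$ with $r^\star\subseteq \cell^\star$, so the restricted optimum $\OPT(\calP_{\cell^\star},1)$ equals $\OPT$. Algorithm~\ref{algo:mainalgo1} iterates over every non-empty cell of all four grids and applies \BlockPartion, in particular on $\cell^\star$, which by Lemma~\ref{singlecell} returns a unit square whose true coverage is at least $(1-\varepsilon)\OPT(\calP_{\cell^\star},1)=(1-\varepsilon)\OPT$. Since $r_{\max}$ is the best candidate over all cells of all four grids, its coverage is at least $(1-\varepsilon)\OPT$ as well.

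For the running time, perfect hashing (Section~\ref{sec:prel}) produces the list of non-empty cells of each grid, together with the points inside them, in $O(n)$ expected time. For a cell $\cell$ with $n_\cell$ points, the cost of \BlockPartion($\cell$) splits into two subcases. When $n_\cell<(1/\varepsilon)^2$, the exact algorithm runs in $O(n_\cell\log n_\cell)=O(n_\cell\log(1/\varepsilon))$ since $\log n_\cell=O(\log(1/\varepsilon))$. When $n_\cell\geq(1/\varepsilon)^2$, the approximate procedure costs $O(n_\cell\log(1/\varepsilon)+(1/\varepsilon)^2)$, and here the additive $(1/\varepsilon)^2$ is absorbed into $n_\cell$. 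In both cases the per-cell cost is $O(n_\cell\log(1/\varepsilon))$; summing over the non-empty cells of a single grid and using $\sum_\cell n_\cell=n$ yields $O(n\log(1/\varepsilon))$, and the four grids multiply this only by a constant. Computing the true weight $w(r)$ used to maintain $w_{\max}$ can be folded into the scan of $\calP_\cell$ and costs nothing extra asymptotically. The substantive point is precisely the case split at $n_\cell=(1/\varepsilon)^2$: without it, the additive $(1/\varepsilon)^2$ term of the approximate branch would accumulate to $\Theta(k\cdot(1/\varepsilon)^2)$ across the $k$ non-empty cells and could dominate when $n$ is close to $k$. Highlighting this absorption is essentially the whole content of the running-time half of the argument; the approximation-ratio half is a one-line consequence of the shifting observation and Lemma~\ref{singlecell}.
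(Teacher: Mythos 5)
Your proposal is correct and follows essentially the same route as the paper's own proof: correctness from the four-grid shifting observation plus Lemma~\ref{singlecell}, and a running-time analysis that splits cells at the threshold $n_\cell=(1/\varepsilon)^2$, bounding $\log n_\cell$ by $O(\log(1/\varepsilon))$ in the small-cell branch and charging the additive $(1/\varepsilon)^2$ of the large-cell branch to the at least $(1/\varepsilon)^2$ points such a cell contains. The absorption step you highlight is exactly the paper's bound $j\left(\frac{1}{\varepsilon}\right)^2\leq n$ over the $j$ large cells.
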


\begin{proof}
The correctness follows from Lemma~\ref{singlecell} and the previous discussion.

The analysis of the running time is given below. The running time consists of two parts:
 cells with number of points more than $\left(\frac{1}{\varepsilon}\right)^{2}$ and
 cells with number of points less than $\left(\frac{1}{\varepsilon}\right)^{2}$.
 Let $n_{1}\geq n_{2}\geq,...,\geq n_{j}\geq \left(\frac{1}{\varepsilon}\right)^{2}>n_{j+1}\geq n_{j+2},...,\geq n_{j+k}$
 be the sorted sequence of the number of points in all cells. Then, we have that
\begin{align*}
\text{Running time} \leq & \sum_{i=1}^{j}O\left(n_{i}\log \left(\frac{1}{\varepsilon}\right)+\left(\frac{1}{\varepsilon}\right)^{2}\right)+\sum_{i=1}^{k}O\left(n_{i+j}\log(n_{i+j})\right) \\
=
& O\left(\log \left(\frac{1}{\varepsilon}\right)\sum_{i=1}^{j}n_{i}+j\left(\frac{1}{\varepsilon}\right)^{2}+\sum_{i=1}^{k}n_{i+j}\log(n_{i+j})\right) \\
\leq & O\left(\log \left(\frac{1}{\varepsilon}\right)\sum_{i=1}^{j}(n_{i})+n+\sum_{i=1}^{k}n_{i+j}\log\left(\frac{1}{\varepsilon}\right)\right) \\
=    & O\left(\log \left(\frac{1}{\varepsilon}\right)\sum_{i=1}^{j+k}(n_{i})+n\right)
=O\left(n\log \left(\frac{1}{\varepsilon}\right)\right).
\end{align*}
\end{proof}

\section{Linear Time Algorithms for $\MaxRS(\calP,m)$} % Major section
\label{sec:generalm}

For general $m$, we need the shifting technique \cite{hochbaum1985approximation}.

\subsection{Grid Shifting}
Consider grids with a different side length $\frac{6}{\varepsilon}$.
We shift the grid to $\frac{6}{\varepsilon}$ different positions: $(0,0),(1,1),....,(\frac{6}{\varepsilon}-1,\frac{6}{\varepsilon}-1)$.
(For simplicity, we assume that $\frac{1}{\varepsilon}$ is an integer and no point in $\calP$ has an integer coordinate, so points in $\calP$ will never lie on the grid line.
Let $$
\mathbb{G}=\left\{G_{6/\varepsilon}(0,0),...,G_{6/\varepsilon}(6/\varepsilon-1,6/\varepsilon-1)\right\}.
$$
The following lemma is quite standard.
\begin{lemma}
\label{shifting}
There exist $G^\star\in\mathbb{G}$ and a $(1-\frac{2\varepsilon}{3})$-approximate solution $R$
such that none of the unit squares in $R$ intersects $G^\star$.
\end{lemma}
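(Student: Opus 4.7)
The plan is to invoke a standard averaging argument over the $6/\varepsilon$ shifts. I would fix an arbitrary optimal solution $\mathcal{O} = \{r_1, \ldots, r_m\}$ achieving value $\OPT$. Because a single point of $\calP$ can be covered by several squares of $\mathcal{O}$, to get a clean charging I would first assign every point $p$ covered by $\mathcal{O}$ to one designated square $r(p) \in \mathcal{O}$ that covers it, chosen arbitrarily, and define $\tilde w(r) = \sum_{p : r(p) = r} w_p$, so that $\sum_{r \in \mathcal{O}} \tilde w(r) = \OPT$ exactly.

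For each shift $i \in \{0, \ldots, 6/\varepsilon - 1\}$ let $B_i = \{r \in \mathcal{O} : r \text{ intersects } G_{6/\varepsilon}(i,i)\}$ and set $R_i = \mathcal{O} \setminus B_i$, padded with dummy squares placed inside grid cells far from $\calP$ so that $|R_i| = m$. Then no square of $R_i$ meets $G_{6/\varepsilon}(i,i)$, and any point that was covered by $\mathcal{O}$ but not by $R_i$ must have its designated square in $B_i$, giving
\[
\OPT - \Cover(\calP, R_i) \;\le\; \sum_{r \in B_i} \tilde w(r).
\]

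The geometric ingredient is the claim that each $r \in \mathcal{O}$ lies in only a few of the $B_i$. Since $r$ is a unit square and consecutive vertical lines of $G_{6/\varepsilon}(i,i)$ are spaced $6/\varepsilon > 1$ apart, the set of real shifts $i \in [0, 6/\varepsilon)$ for which some vertical line crosses $r$ is a single interval of length exactly $1$, and therefore contains at most $2$ integer values of $i$; the same bound holds for horizontal lines, so $|\{i : r \in B_i\}| \le 4$. Summing the previous inequality over $i$,
\[
\sum_{i=0}^{6/\varepsilon - 1} \bigl(\OPT - \Cover(\calP, R_i)\bigr) \;\le\; \sum_{r \in \mathcal{O}} \tilde w(r) \cdot |\{i : r \in B_i\}| \;\le\; 4\,\OPT,
\]
so by averaging some $i^\star$ satisfies $\OPT - \Cover(\calP, R_{i^\star}) \le \tfrac{4\OPT}{6/\varepsilon} = \tfrac{2\varepsilon}{3}\OPT$. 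Taking $G^\star = G_{6/\varepsilon}(i^\star, i^\star)$ and $R = R_{i^\star}$ then proves the lemma.

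The main subtlety I anticipate is the passage from raw square weight to $\tilde w$. Bounding the loss by $\sum_{r \in B_i} w(r)$, with $w(r) = \sum_{p \in r} w_p$, would overcount each point by its $\mathcal{O}$-covering multiplicity and only yield a bound in terms of $\sum_r w(r)$, which can be far larger than $\OPT$; the designated-square trick removes this overcounting, after which the averaging together with the ``at most four bad shifts per square'' count finishes the proof cleanly.
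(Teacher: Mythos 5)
Your proof is correct, and while it rests on the same skeleton as the paper's (average over the $6/\varepsilon$ diagonal shifts, show the total loss across all shifts is at most $4\,\OPT$, conclude by pigeonhole), the charging scheme is genuinely different. The paper first argues that there is an optimal solution in which every covered point lies in at most $4$ squares, then handles a square intersecting the grid by \emph{moving} it to the side or quadrant retaining the most weight (losing a $1/2$ or $3/4$ fraction), and bounds the aggregate loss by $1\cdot\sum_r w(r)\le 4\,\OPT$. You instead \emph{delete} the offending squares outright, and compensate by assigning each covered point to a single designated square so that $\sum_r \tilde w(r)=\OPT$ exactly; the geometric content is then isolated in the clean statement that each square is bad for at most $2+2=4$ integer shifts. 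Your version buys two things: it dispenses entirely with the preprocessing step about coverage multiplicity (whose justification in the paper is only sketched), and it avoids the case analysis on how a square meets the grid lines. Indeed, the paper's per-square accounting $\max\{1/2\times 2,\ 3/4\times 1\}=1$ implicitly treats the ``one line'' and ``two lines'' cases as exclusive across shifts, whereas a single square can meet only a vertical line at two shifts and only a horizontal line at two other shifts; your union bound $|\{i: r\in B_i\}|\le 4$ combined with the exact identity $\sum_r\tilde w(r)=\OPT$ handles this uniformly and still lands on the constant $4$. The only cosmetic point is the padding with dummy squares to restore $|R_i|=m$, which you already address by placing them strictly inside cells away from $\calP$.
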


\begin{figure}[t]
\centering
\includegraphics[width=0.65\textwidth]{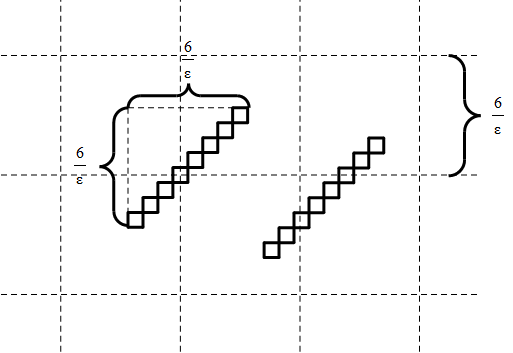}
\caption{Proof of Lemma~\ref{shifting}: the shifting technique.}
\label{trace}
\end{figure}

\begin{proof}
	For any point $p$, we can always use four unit squares to cover the $2\times2$ square centered at $p$.
	Therefore, there exists an optimal solution $\mathsf{OPT}$ such that each covered point is cover by at most 4 unit squares in $\mathsf{OPT}$.
	For each grid $G_{\frac{6}{\varepsilon}}(i,i)\in\mathbb{G}$, we build a modified answer $R_{i}$ from $\mathsf{OPT}$ in the following way. For each square $r$ that intersects with $G_{\frac{6}{\varepsilon}}(i,i)$, there are two different situations. If $r$ only intersects with one vertical line or one horizontal line. We move the square to one side of the line with bigger weight. In this case we will lose at most half of the weight of $r$. Notice that this kind of squares can only intersect with two grids in $\mathbb{G}$.
	Similarly, If $r$ intersects with one vertical line and one horizontal line at the same time, we move it to one of the four quadrants derived by these two lines. In this case we will lose at most 3/4 of the weight of $r$. This kind of squares can only intersect with one grid in $\mathbb{G}$. (see Figure~\ref{trace})
	Now we calculate the sum of the weights we lose from $R_{0},R_{1},...,R_{\frac{6}{\varepsilon}-1}$, which is at most $\max\{1/2\times2,3/4\times1\}=1$ times the sum of weights of squares in $\mathsf{OPT}$.
By the definition of $\mathsf{OPT}$, it is at most $4w(\mathsf{OPT})$.
So the sum of the weights of $R_{0},R_{1},...,R_{\frac{6}{\varepsilon}-1}$ is at least
	$(\frac{6}{\varepsilon}-4)w(\mathsf{OPT})$. Therefore there exists some $i$ such that $R_{i}$(which does not intersect $G_{\frac{6}{\varepsilon}}(i,i)$) is a $(1-\frac{2\varepsilon}{3})$ approximate answer.
\end{proof}

We will approximately solve the problem for each grid $G$ in $\mathbb{G}$ (that is, find an approximation to $R_G$, where $R_G$ denotes the best solution where no squares in $R_G$ intersect $G$), and then select the optimal solution among them.

The idea to solve a fixed grid is as follows. First, we present a subroutine in Subsection~\ref{subsec:F} which can approximately solve the problem for a fixed cell. Then, we apply it to all the nonempty cells.
To compute our final output from those obtained solutions, we apply a dynamic programming algorithm or a greedy algorithm which are shown in the next two sections.

\subsection{Dynamic Programming}
\label{subsec:dp}

Now consider a fixed grid $G\in \mathbb{G}$.
Let $\cell_{1}, \ldots , \cell_{t}$ be the nonempty cells of grid $G$
and $\OPT$ be the optimal solution that does not intersect $G$.
Obviously, $(\frac{6}{\varepsilon})^{2}$ unit squares are enough to cover an entire $\frac{6}{\varepsilon}\times\frac{6}{\varepsilon}$ cell.
Thus the maximum number of unit squares we need to place in one single cell is $m_{c}=\min\{m,(\frac{6}{\varepsilon})^{2}\}$.

Let $\OPT(\cell_i,k)$ be the maximum weight we can cover with $k$ unit squares in cell $\cell_i$.
For each nonempty cell $\cell_i$ and for each $k\in[m_{c}]$, we find a $(1-\frac{\varepsilon}{3})$-approximation $\F(\cell_i,k)$ to $\OPT(\cell_i,k)$. We will show how to achieve this later in Subsection~\ref{subsec:F}. Now assume that we can do it.

Let $\OPT_{\F}$ be the optimal solution we can get from the values $\F(\cell_i,k)$.
More precisely,
\begin{align}
\label{eq:opt}
\OPT_{\F}=\max_{k_1,\ldots,k_t \in [m_{c}]}
\left\{\sum_{i=1}^t \F(\cell_i,k_i)
	\,\,\Big\rvert\,\, \sum_{i=1}^{t} k_i=m\right\}.
\end{align}
We can see that
$\OPT_{\F}$ must be a $(1-\frac{\varepsilon}{3})$-approximation to $\OPT$.
We can easily use dynamic programming to calculate
the exact value of $\OPT_{\F}$.
Denote by $A(i,k)$ the maximum weight we can cover with $k$ unit squares in cells $\cell_{1},\cell_{2},...,\cell_i$.
We have the following DP recursion:
$$
A(i,k)=\left\{
\begin{array}{lcl}
\max_{j=0}^{\min(k,m_{c})}
\left\{A(i-1,k-j)+\F(\cell_i,j)\right\}  &      &  {\text{if} \quad i>1}\\
\F(c_{1},k)  &      &  {\text{if} \quad i=1}\\
\end{array}\right.
$$

The running time of the above simple dynamic programming is $O(m\cdot t\cdot m_{c})$.
One may notice that each step of the DP is computing a $(+,\max)$ convolution.
However, existing algorithms (see e.g., \cite{bremner2006necklaces,williams2014faster}) only run slightly better than quadratic time. So the improvement would be quite marginal.
 But in the next section, we show that if we would like to settle for an
 approximation to $\OPT_{\F}$, the running time can be dramatically improved to linear.

\subsection{A Greedy Algorithm}
\label{subsec:greedy}
We first apply our $\MaxRS(\calP, 1)$ algorithm in Section~\ref{sec:m1}
to each cell $\cell_i$,
to compute a $(1-\frac{\varepsilon^{2}}{9})$-approximation of
 $\OPT(\cell_i,1)$. Let $f(\cell_i,1)$ be the return values.
\footnote{
Both $f(\cell_i,1)$ and $\F(\cell_i,1)$ are approximations of $\OPT(\cell_i,1)$,
with slightly different approximation ratios.
}
This takes $O(n\log{\frac{1}{\varepsilon}})$ time.
Then, we use the selection algorithm to find out
the $m$ cells with the largest $f(\cell_i,1)$ values.
Assume that those cells are $\cell_{1}, ..., \cell_{m}, \cell_{m+1},...,\cell_{t}$,
sorted from largest to smallest by $f(\cell_i,1)$.

\begin{lemma}
\label{selectm}
Let $\OPT(m)$ be the maximum weight we can cover using $m$ unit squares in $\cell_{1}, ..., \cell_{m}$. Then
$\OPT(m)\geq(1-\frac{\varepsilon^{2}}{9})\OPT$.
\end{lemma}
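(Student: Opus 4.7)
The plan is to build, starting from the true optimum $\OPT$, a feasible candidate solution for $\OPT(m)$ by relocating every square that $\OPT$ places outside the top-$m$ cells into a currently unused top-$m$ cell; the $\left(1-\frac{\varepsilon^{2}}{9}\right)$ factor will come out exactly from the $f$-versus-$\OPT(\cdot,1)$ approximation slack. I write $a_j:=\OPT(\cell_j,1)$ and $b_j:=f(\cell_j,1)$, let $k_j$ be the number of squares $\OPT$ places in $\cell_j$ (so $\sum_j k_j=m$), set $e:=\sum_{j>m}k_j$, and let $U:=\{i\le m:k_i=0\}$. A one-line pigeonhole---the cells of $\{1,\ldots,m\}\setminus U$ accommodate $\sum_{i\le m}k_i=m-e$ squares---gives $|U|\ge e$, so I can injectively match the $e$ ``extra'' squares of $\OPT$ to distinct cells of $U$.

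Two short inequalities will drive everything. First, an averaging argument: among the $k_j$ squares $\OPT$ places in $\cell_j$, the heaviest covers at least the average, hence $\OPT(\cell_j,k_j)\le k_j\, a_j$. Second, the choice of the top-$m$ cells (sorted by $b$) together with $\left(1-\frac{\varepsilon^{2}}{9}\right)a_j\le b_j\le a_j$ will give, for every $i\le m<j$,
$$
a_i\ \ge\ b_i\ \ge\ b_m\ \ge\ b_j\ \ge\ \left(1-\frac{\varepsilon^{2}}{9}\right)a_j,
$$
so any top-$m$ cell's one-square value dominates $\left(1-\frac{\varepsilon^{2}}{9}\right)$ times that of any non-top cell.

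Then I construct the candidate solution: reuse $\OPT$'s placement inside each $\cell_i$ with $i\le m$ and $k_i>0$ (contributing $\OPT(\cell_i,k_i)$), and for each extra square---originating in some $\cell_{j_k}$ with $j_k>m$ and matched to a distinct $\cell_{i_k}\in U$---drop a single optimally placed square in $\cell_{i_k}$ (contributing $a_{i_k}$). This uses exactly $\sum_{i\le m}k_i+e=m$ squares, all inside $\cell_1,\ldots,\cell_m$, so it is feasible for $\OPT(m)$. Its weight is at least
$$
\sum_{i\le m}\OPT(\cell_i,k_i)+\sum_{k=1}^{e} a_{i_k}\ \ge\ \sum_{i\le m}\OPT(\cell_i,k_i)+\left(1-\frac{\varepsilon^{2}}{9}\right)\sum_{j>m}k_j\,a_j,
$$
and applying the averaging inequality to the last sum lower-bounds it by $\left(1-\frac{\varepsilon^{2}}{9}\right)\sum_{j>m}\OPT(\cell_j,k_j)$; adding the untouched top part yields $\OPT(m)\ge\left(1-\frac{\varepsilon^{2}}{9}\right)\OPT$.

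The main obstacle, if any, is conceptual rather than technical: one must notice that the correct comparison is between $a_{i_k}$ (value of a single new square in a top cell) and $a_{j_k}$ (single-square value of the cell of origin), and not between the \emph{multi-square} optima of the two cells; the averaging inequality $\OPT(\cell_j,k_j)\le k_j a_j$ is precisely what bridges these two viewpoints. Once the sorted-plus-approximation chain and the pigeonhole bound on $|U|$ are in place, the rest is a straight sum.
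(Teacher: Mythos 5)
Your proof is correct and follows essentially the same route as the paper's: a pigeonhole argument to relocate the squares placed outside the top-$m$ cells into empty top cells, the chain $\OPT(\cell_i,1)\geq f(\cell_i,1)\geq f(\cell_j,1)\geq(1-\frac{\varepsilon^2}{9})\OPT(\cell_j,1)$, and the subadditivity bound $\OPT(\cell_j,k_j)\leq k_j\,\OPT(\cell_j,1)$. You have merely made explicit the bookkeeping (the matching, the feasibility count, and the final summation) that the paper leaves terse.
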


\begin{proof}
Let $k$ be the number of unit squares in $\OPT$ that are chosen from $\cell_{m+1},\ldots,\cell_{t}$. This means there must be at least $k$ cells in $\{\cell_{1},\ldots,\cell_{m}\}$ such that $\OPT$ does not place any unit square.
Therefore we can always move all $k$ unit squares placed in $\cell_{m+1},\ldots,\cell_{t}$ to these empty cells such that each empty cell contains only one unit square. Denote the weight of this modified solution by $A$. Obviously, $\OPT(m)\geq A$.
For any
$i$,$j$ such that $1\leq i\leq m<j\leq t$, we have $\OPT(\cell_i,1)\geq f(\cell_i,1)\geq f(\cell_j,1)\geq (1-\frac{\varepsilon^{2}}{9})\OPT(\cell_j,1)$. Combining with a simple observation that
$\OPT(\cell_j, k)\leq k\OPT(\cell_j, 1)$,
we can see that $A\geq(1-\frac{\varepsilon^{2}}{9})\OPT$. Therefore, $\OPT(m)\geq(1-\frac{\varepsilon^{2}}{9})\OPT$.
\end{proof}

Hence, from now on, we only need to consider the first $m$ cells
$\{\cell_{1},...,\cell_{m}\}$.

Let $\OPT_{\F}(m)$ be the optimal solution we can get from the values $\F(\cell_i,k)$ of the first $m$ cells.
More precisely,
\begin{align}
\label{eq:opt}
\OPT_{\F}(m)=\max_{k_1,\ldots,k_m \in [m_{c}]}
\left\{\sum_{i=1}^m \F(\cell_i,k_i)
	\,\,\Big\rvert\,\, \sum_{i=1}^{m} k_i=m\right\}.
\end{align}

We distinguish two cases.
If $m\leq324(\frac{1}{\varepsilon})^{4}$,
we just apply the dynamic program to compute $\OPT_\F(m)$.
The running time of the above dynamic programming is $O((\frac{1}{\varepsilon})^{O(1)})$.
If $m>324(\frac{1}{\varepsilon})^{4}$,
we can use a greedy algorithm to find an answer of weight at least $(1-\frac{\varepsilon^{2}}{9})\OPT_{\F}(m)$.

Let $\upperd=(\frac{6}{\varepsilon})^2$.
For each cell $\cell_{i}$,
we find the upper convex hull of 2D points
$\{(0,\F(\cell_{i},0))$,$(1,\F(\cell_{i},1))$, \ldots , $(\upperd,\F(\cell_{i},\upperd))\}$.
See Figure~\ref{fig:fandtf}.
Suppose the convex hull points are
$\{(t_{i,0}, \F(\cell_{i},t_{i,0}))$, $(t_{i,1},\F(\cell_{i},t_{i,1}))$, ... , $(t_{i,s_{i}},\F(\cell_{i},t_{i,s_{i}}))\}$, where
$t_{i,0}=0$,$t_{i,s_{i}}=\upperd$.
For each cell, since the above points are already sorted from left to right, we can compute the convex hull in $O(\upperd)$ time by Graham's scan\cite{graham1972efficient}. Therefore, computing the convex hulls for all these cells takes $O(m \upperd)$ time.

\begin{figure}[t]
	\centering
	\includegraphics[width=0.7\textwidth]{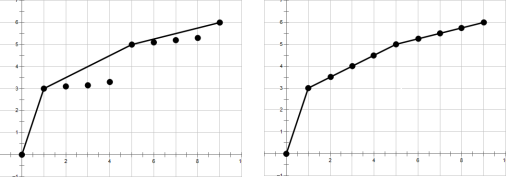}
	\caption{$\F(\cell_{i},k)$ (left) and $\tF(\cell_{i},k)$ (right)}
	\label{fig:fandtf}
\end{figure}

For each cell $\cell_{i}$, we maintain a value $p_{i}$ representing that we are going to place $t_{i,p_{i}}$ squares in cell $\cell_{i}$.
Initially for all $i\in [m]$, $p_{i}=0$.
In each stage, we find the cell $\cell_{i}$ such that
current slope (the slope of the next convex hull edge)
$$
\frac{\F(\cell_{i},t_{i,p_{i}+1})-\F(\cell_{i},t_{i,p_{i}})}{t_{i,p_{i}+1}-t_{i,p_{i}}}$$
is maximized.
Then we add 1 to $p_{i}$, or equivalently
we assign $t_{i,p_{i}+1}-t_{i,p_{i}}$ more squares into cell $\cell_{i}$.
We repeat this step until we have already placed at least $m-\upperd$ squares.
We can always achieve this since we can place at most
$\upperd$ squares in one single cell in each iteration.
Let $m'$ the number of squares we have placed ($m=\upperd\leq m'\leq m$).
For the remaining $m-m'$ squares, we allocate them arbitrarily.
We denote the algorithm by $\GREEDYP$ and let the value obtained be $\greedy(m')$.
Having the convex hulls, the running time of the greedy algorithm is $O(m\log m)$.

Now we analyze the performance of the greedy algorithm.

\begin{lemma}
\label{lm:greedyratio}
The above greedy algorithm computes an $(1-\varepsilon^2/9)$-approximation to
$\OPT_{\F}(m)$.
\end{lemma}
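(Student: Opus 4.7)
The plan is to work with the concave upper envelope of each $\F(\cell_i,\cdot)$. Define $\tF(\cell_i,k)$ for $k\in[0,\upperd]$ to be the piecewise linear function whose graph is the upper convex hull of $\{(j,\F(\cell_i,j))\}_{j=0}^{\upperd}$. Three properties follow directly from the construction: $\tF(\cell_i,k)\geq \F(\cell_i,k)$ for every integer $k$, each $\tF(\cell_i,\cdot)$ is concave, and $\tF(\cell_i,0)=\F(\cell_i,0)=0$. Crucially, at every breakpoint $t_{i,j}$ we have $\tF(\cell_i,t_{i,j})=\F(\cell_i,t_{i,j})$, so the greedy's output value can be read off either function interchangeably.

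Next I would recast greedy as an exact optimizer of the separable concave resource allocation problem
\begin{equation*}
H(M)\;=\;\max\Bigl\{\sum_{i=1}^{m}\tF(\cell_i,x_i)\;\Big|\; x_i\geq 0,\;\sum_{i=1}^{m}x_i=M\Bigr\}.
\end{equation*}
Since each $\tF(\cell_i,\cdot)$ is concave, $H$ is concave in $M$ with $H(0)=0$. The greedy procedure picks upper-hull edges in decreasing order of slope, which is the classical sorted-slopes scheme for separable concave maximization; a one-line exchange argument (replacing any used edge by an unused edge of at least as large a slope cannot decrease the objective) shows that after greedy has placed $m'$ squares the accumulated value is exactly $H(m')$. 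By the stopping rule, each iteration raises the square count by at most $\upperd$ and the loop halts the first time this count reaches $m-\upperd$, hence $m-\upperd\leq m'\leq m$.

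Finally I combine the pieces. Since $\tF\geq \F$ pointwise, $H(m)\geq \OPT_{\F}(m)$. Because $H$ is concave with $H(0)=0$, the ratio $H(M)/M$ is nonincreasing in $M$, so
\begin{equation*}
\greedy(m')\;=\;H(m')\;\geq\;\frac{m'}{m}\,H(m)\;\geq\;\Bigl(1-\frac{\upperd}{m}\Bigr)\OPT_{\F}(m).
\end{equation*}
Substituting $\upperd=(6/\varepsilon)^{2}=36/\varepsilon^{2}$ together with the case hypothesis $m>324(1/\varepsilon)^{4}$ gives $\upperd/m<\varepsilon^{2}/9$, which yields the claimed $(1-\varepsilon^{2}/9)$-approximation.

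The main obstacle is the middle step: being careful about why greedy coincides exactly with $H(m')$ rather than only with an ``integer breakpoint'' restriction of it. The subtlety is that greedy always lands at a breakpoint allocation, whereas $H(M)$ is defined via arbitrary real $x_i$; one needs the observation that because each $\tF(\cell_i,\cdot)$ is concave and piecewise linear with integer breakpoints, the real-valued optimum for $M$ equal to the total length of the top-slope edges is itself attained at the corresponding breakpoint allocation, so the sorted-slopes scheme is a genuine optimum rather than a heuristic lower bound. Once this identification is in hand the remainder is just the concavity-plus-arithmetic estimate above.
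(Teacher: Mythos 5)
Your proposal is correct and follows essentially the same route as the paper: replace each $\F(\cell_i,\cdot)$ by its upper-convex-hull interpolation $\tF$, observe that greedy on the sorted hull slopes exactly solves the concave relaxation at $m'$ while still coinciding with $\F$ at the breakpoints, and then use concavity with $\tF(\cell_i,0)=0$ to bound the loss by $\upperd/m<\varepsilon^2/9$. The only difference is presentational (you phrase the relaxation over real allocations $H(M)$ and make the $H(0)=0$ hypothesis explicit, where the paper works with integer allocations $\OPT_{\tF}$), so no further changes are needed.
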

\begin{proof}
Define an auxiliary function $\tF(\cell_{i},k)$ as follows:
If $k=t_{i,j}$ for some $j$, $\tF(\cell_{i},k)=\F(\cell_{i},k)$.
Otherwise, suppose $t_{i,j}<k<t_{i,j+1}$,
then $$
\tF(\cell_{i},k)=
\F(\cell_{i},t_{i,j})+
\frac{\F(\cell_{i},t_{i,j+1})-\F(\cell_{i},t_{i,j})}{t_{i,j+1}-t_{i,j}}\times(k-t_{i,j}).
$$
Intuitively speaking, $\tF(\cell_{i},k)$(See Figure~\ref{fig:fandtf}) is the function defined by the
upper convex hull at integer points.
\footnote{
	At first sight, it may appear that $\F(\cell_i,k)$ should be a concave function.
	However, this is not true. A counter-example is provided in the appendix.}
Thus, for all $i\in[m]$, $\tF(\cell_{i},k)$ is a concave function.
Obviously, $\tF(\cell_i,k)\geq \F(\cell_i,k)$ for all $i\in [m]$ and all $k\in [\upperd]$.

Let $\OPT_{\tF}(i)$  be the optimal solution we can get from the values $\tF(\cell_{i},k)$ by placing $i$ squares.
By the convexity of $\tF(\cell_{i},k)$,
the following greedy algorithm is optimal:
as long as we still have budget, we assign 1 more square to the cell
which provides the largest increment of the objective value.
In fact, this greedy algorithm runs in almost the same way
as \GREEDYP. The only difference is that  \GREEDYP\ only picks
an entire edge of the convex hull, while the greedy algorithm here
may stop in the middle of an edge (only happen for the last edge).
Since the marginal increment never increases, we
can see that $\OPT_{\tF}(i)$ is concave.

By the way of choosing cells in our greedy algorithm,
we make the following  simple but important observation:
$$
\greedy(m')=\OPT_{\tF}(m')=\OPT_{\F}(m').
$$
So, our greedy algorithm is in fact optimal for $m'$.
%Since $\tF(\cell_{i},k)$ is always no smaller than $F(\cell_{i},k)$,
%we have $\OPT_{\tF}(m')\geq \OPT_{\F}(m')$.
Combining with $m-m'\leq\upperd$ and the concavity of $\OPT_{\tF}$,
we can see that
$$
\OPT_{\tF}(m')\geq
  \frac{m-\upperd}{m} \OPT_{\tF}(m)\geq 
  \left(1-\frac{\varepsilon^{2}}{9}\right)\OPT_{\tF}(m)\geq
  \left(1-\frac{\varepsilon^{2}}{9}\right)\OPT_{\F}(m).
$$
The last inequality holds because
$\OPT_{\tF}(i)\geq \OPT_{\F}(i)$ for any $i$.
The second last inequality holds because
$m>\frac{324}{\varepsilon^4}$ and $b=\frac{36}{\varepsilon^2}$.
\end{proof}

\subsection{Computing $\F(\cell,k)$ }
\label{subsec:F}
Now we show the subroutine $\GeneralmPartion$ for computing $\F(\cell,k)$.

We use a similar partition algorithm as Section~\ref{subsec:partition}.
The only difference is that this time we need to partition the cell finer
so that the maximum possible weight of points between any two adjacent parallel partition lines is $(\frac{\varepsilon^{3}W_{\cell}}{432})$.
After partitioning the cell, we enumerate all the possible ways of placing $k$
unit squares at the grid point.
Similarly, for each unit square $r$, we only count the weight of points that are in some cell fully covered by $r$.

We can adapt the algorithm in~\cite{de2009covering} to enumerate these possible choices in
 $O((\frac{1}{\varepsilon})^{\Delta_1})$
time where $\Delta_1=O(\sqrt{m_c})=O(\min(\sqrt{m},\frac{1}{\varepsilon}))$. The details are deferred to Subsection~\ref{subsec:Enum}.
Now we prove the correctness of this algorithm.
\begin{lemma}
$\GeneralmPartion$ returns a $(1-\frac{\varepsilon}{3})$
approximation to $\OPT(\cell_i,k)$.
\end{lemma}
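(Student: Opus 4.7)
My plan is to mirror the proof of Lemma~\ref{singlecell} but with the new cell size, the new partition granularity, and an averaging argument that accommodates $k$ squares instead of one. Let $R^\star$ be an optimal placement of $k$ unit squares inside $\cell$ achieving $\OPT(\cell,k)$. I will modify $R^\star$ into a configuration $R'$ whose squares are all located at grid vertices of $\calL$, and I will show that the approximate weight $\F(\cell,k)$ assigned to the best such aligned configuration is at least $\OPT(\cell,k)-\text{err}$ for a controlled error term. Since the enumeration stage considers every placement of $k$ squares at grid vertices, it will pick some configuration whose approximate weight is at least that of $R'$.

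First I would establish a lower bound on $\OPT(\cell,k)$. Because $\cell$ has side length $\frac{6}{\varepsilon}$, the cell is tiled by exactly $m_c=(6/\varepsilon)^2=36/\varepsilon^2$ pairwise non-overlapping unit squares whose coverage weights sum to $W_\cell$. Taking the $k$ heaviest of these (recall $k\leq m_c$) gives a feasible solution of weight at least $kW_\cell/m_c=k\varepsilon^2 W_\cell/36$ by averaging, so
\begin{equation*}
\OPT(\cell,k)\ \geq\ \frac{k\varepsilon^2 W_\cell}{36}.
\end{equation*}

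Next I would bound the loss from snapping to the grid. For each square $r\in R^\star$, move $r$ so that its top-left corner coincides with the top-left corner of the small cell of $\calL$ that contains it; this shift is strictly smaller than one small cell in each coordinate, so the displaced square stays inside $\cell$. The symmetric difference of $r$ and its shifted copy is contained in the union of one horizontal and one vertical strip of width at most one small cell, so the weight loss is at most four times the maximum weight of points strictly between two adjacent parallel lines of $\calL$, namely $4w_d=4\cdot\frac{\varepsilon^{3}W_\cell}{432}$. A further additive loss of the same magnitude per square accounts for boundary small cells that are only partially covered by the aligned square and therefore excluded from the approximate weight. Summing over all $k$ squares (and using the union bound, so double-counted overlaps only help), the total shortfall is at most
\begin{equation*}
4k\cdot w_d\ =\ \frac{k\varepsilon^{3}W_\cell}{108}.
\end{equation*}

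Finally I would combine the two bounds: the ratio of the loss to the optimum is at most
\begin{equation*}
\frac{k\varepsilon^{3}W_\cell/108}{k\varepsilon^{2}W_\cell/36}\ =\ \frac{\varepsilon}{3},
\end{equation*}
so $\F(\cell,k)\geq (1-\varepsilon/3)\OPT(\cell,k)$. The only delicate point I anticipate is the interaction between the approximate counting rule (which credits only small cells entirely covered by some square) and overlapping squares in $R'$; my intended resolution is to charge the missing weight of any partially-covered small cell to a single square whose boundary touches it, so that the per-square bound of $4w_d$ is preserved even when squares overlap. Everything else is a direct adaptation of the argument in Lemma~\ref{singlecell}, with the constants $(6/\varepsilon)^{2}$ and $\varepsilon^{3}W_\cell/432$ chosen precisely so that the two estimates above combine to the $\varepsilon/3$ bound.
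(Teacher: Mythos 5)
Your proof is correct and takes essentially the same route as the paper's: the averaging lower bound $\OPT(\cell,k)\geq k\varepsilon^{2}W_{\cell}/36$ obtained by tiling the cell with $(6/\varepsilon)^{2}$ unit squares, a per-square snapping loss of $4w_d$ with $w_d=\varepsilon^{3}W_{\cell}/432$, and the ratio computation yielding $\varepsilon/3$. One minor wording slip: the clause about ``a further additive loss of the same magnitude'' would, read literally, double the error to $8kw_d$ and spoil the constant, but the four strips already account simultaneously for the shift and for the partially covered boundary small cells, so your stated total of $4kw_d$ is the correct one.
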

\begin{proof}
We can use $(\frac{6}{\varepsilon})^{2}$ unit squares to cover the entire cell, so $\OPT(\cell_i,k)\geq \frac{k\varepsilon^{2}W_{\cell}}{36}$.
By the same argument as in Theorem~\ref{thm:main1},
the difference between $\OPT(\cell_i,k)$
and the answer we got are at most $4k$ times the maximum possible weight of points between two adjacent parallel partition lines.
Therefore, the algorithm returns a $(1-\frac{\varepsilon}{3})$-approximate answer of $\OPT(\cell_i,k)$.
\end{proof}

Now we can conclude the following theorem.
\begin{theorem}
\label{thm:main2}
Let $P$ be a set of $n$ weighted point, for any $0<\varepsilon<1$ we can find a $(1-\varepsilon)$-approximate answer
	for $\MaxRS(\calP, m)$
	in time
	$$
	O\left( \frac{n}{\varepsilon}\log \frac{1}{\varepsilon}+
    \frac{m}{\varepsilon}\log m+
    m\left(\frac{1}{\varepsilon}\right)^{\Delta_1}\right),
	$$
	where $\Delta_1=O(\min(\sqrt{m},\frac{1}{\varepsilon}))$.
\end{theorem}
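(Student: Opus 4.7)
The plan is to stitch together the four ingredients already developed (the shifting lemma, the top-$m$ selection lemma, the subroutine $\GeneralmPartion$ for computing $\F(\cell,k)$, and the greedy/DP combiner), track the approximation losses multiplicatively, and then account for the running time across the $O(1/\varepsilon)$ candidate grids. Concretely, for each shift $G\in\mathbb{G}$, I would (i) run the single-square algorithm of Theorem~\ref{thm:main1} on each nonempty cell of $G$ to obtain $f(\cell_i,1)$, (ii) use linear-time selection to extract the $m$ cells with largest $f(\cell_i,1)$, (iii) invoke $\GeneralmPartion$ on each of these $m$ cells for each $k\in[m_c]$ to obtain the table $\F(\cell_i,k)$, and (iv) combine those tables either by the DP of Section~\ref{subsec:dp} (when $m\le 324/\varepsilon^4$) or by the greedy scheme $\GREEDYP$ of Section~\ref{subsec:greedy} (when $m>324/\varepsilon^4$). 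The final answer is the best solution produced over all $6/\varepsilon$ grid shifts.

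For correctness I would chain the approximation guarantees. Lemma~\ref{shifting} gives a grid $G^\star$ whose associated solution $R_{G^\star}$ is a $(1-2\varepsilon/3)$-approximation to $\OPT$. Restricting to the top-$m$ cells of $G^\star$ costs a factor $(1-\varepsilon^2/9)$ by Lemma~\ref{selectm}. Each per-cell value $\F(\cell_i,k)$ loses another $(1-\varepsilon/3)$ factor, and combining them with $\GREEDYP$ loses one more $(1-\varepsilon^2/9)$ factor (Lemma~\ref{lm:greedyratio}); using the DP in the small-$m$ regime, the combining step is exact. Multiplying,
\[
\bigl(1-\tfrac{2\varepsilon}{3}\bigr)\bigl(1-\tfrac{\varepsilon^{2}}{9}\bigr)\bigl(1-\tfrac{\varepsilon}{3}\bigr)\bigl(1-\tfrac{\varepsilon^{2}}{9}\bigr)\;\ge\;1-\varepsilon
\]
for all sufficiently small $\varepsilon>0$; if necessary, running the scheme with $\varepsilon'=\varepsilon/c$ for a suitable absolute constant $c$ guarantees the exact $(1-\varepsilon)$ ratio without changing the asymptotic complexity.

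For the running time I would charge each ingredient per shift and then multiply by $6/\varepsilon$. Step (i) costs $O(n\log(1/\varepsilon))$ by Theorem~\ref{thm:main1}; summed over shifts this is $O((n/\varepsilon)\log(1/\varepsilon))$. Step (ii) is linear in the number of nonempty cells, hence dominated. Step (iii) calls $\GeneralmPartion$ on $m$ cells with $m_c=\min\{m,(6/\varepsilon)^2\}$ choices of $k$, for a cost $O(m(1/\varepsilon)^{\Delta_1})$ per shift, giving $O(m(1/\varepsilon)^{\Delta_1})$ in total after absorbing the $6/\varepsilon$ factor into the exponent. Step (iv) is either $O((1/\varepsilon)^{O(1)})$ (DP branch, using $m\le 324/\varepsilon^4$) or $O(m\log m)$ (greedy branch, after building the $m$ upper convex hulls in $O(mb)=O(m/\varepsilon^2)$ time, which is absorbed into the $m(1/\varepsilon)^{\Delta_1}$ term); summed over shifts the greedy branch contributes $O((m/\varepsilon)\log m)$. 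Adding the three surviving terms yields the bound claimed in the theorem.

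The main subtlety to verify carefully is the greedy step: Lemma~\ref{lm:greedyratio} is stated for a fixed grid, and its $(1-\varepsilon^2/9)$ guarantee is what makes the multiplicative loss from the greedy combiner small enough to survive being folded in with the $(1-2\varepsilon/3)$ shifting loss and the $(1-\varepsilon/3)$ per-cell loss. A second item worth checking is the regime split at $m=324/\varepsilon^4$: in the small-$m$ regime the DP step is exact so only three error factors appear, while in the large-$m$ regime the greedy error factor appears but the ratio $\upperd/m\le 1/(9/\varepsilon^2)$ is precisely what drives it below $\varepsilon^2/9$. Once these two points are in place, the rest is routine bookkeeping; I expect no genuinely hard step, only the need to be meticulous about what is being approximated at each stage so that the four error factors compose correctly.
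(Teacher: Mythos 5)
Your proposal is correct and follows essentially the same route as the paper's own proof: the same four-stage algorithm per shift, the same chaining of Lemmas~\ref{shifting}, \ref{selectm}, \ref{lm:greedyratio} and the $(1-\varepsilon/3)$ guarantee of \GeneralmPartion{} into the product $(1-\frac{2\varepsilon}{3})(1-\frac{\varepsilon^2}{9})^2(1-\frac{\varepsilon}{3})$, and the same per-shift running-time accounting multiplied by $6/\varepsilon$. (In fact the product exceeds $1-\varepsilon$ for all $0<\varepsilon<1$, not just for small $\varepsilon$, so your rescaling fallback is unnecessary.)
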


\begin{algorithm}[h]
	\caption{$\MaxRS(\calP, m)$}
	\begin{algorithmic}[]
		\State $w_{\max}\leftarrow0$
		\For {each $G\in\{G_{\frac{6}{\varepsilon}}(0,0),...,G_{\frac{6}{\varepsilon}}(\frac{6}{\varepsilon}-1,\frac{6}{\varepsilon}-1)\}$}
		\State Use perfect hashing to find all the non-empty cells of $G$.
		\For {each non-empty cell $\cell$ of $G$}
		\State $r_{\cell}\leftarrow$ Algorithm~\ref{algo:mainalgo1} for $\cell$ with approximate ratio $(1-\frac{\varepsilon^{2}}{9})$
		\EndFor;
		\State Find the $m$ cells with the largest $r_{\cell}$.
		Suppose they are $\cell_{1},...,\cell_{m}$.
		\For{$i\leftarrow 1$ \textbf{to} $m$}
		\For{$k\leftarrow 1$ \textbf{to} $\upperd$}
		$\F(\cell_{i},k)\leftarrow$ \GeneralmPartion(c,k)
		\EndFor
		\EndFor;
		\State {\bf if} $m\leq324(\frac{1}{\varepsilon})^{4}$, {\bf then} $r\leftarrow \DP(\{\F(\cell_{i},k)\})$
        \State {\bf else}  $r\leftarrow \GREEDYP(\{\F(\cell_{i},k)\})$
		\State {\bf if} $w(r)> w_{\max}$, {\bf then} $w_{\max}\leftarrow w(r)$ and $r_{\max}\leftarrow r$
		\EndFor;
		\State \textbf{return} $r_{\max}$;
	\end{algorithmic}
    \label{algo:mainalgo2}
\end{algorithm}

\begin{proof}
The algorithm is summarized in Algorithm~\ref{algo:mainalgo2}.	
By Lemma~\ref{lm:greedyratio}, the greedy algorithm computes an $(1-\varepsilon^2/9)$-approximation to
$\OPT_{\F}(m)$.
Since $\F(\cell_i,k)$ is $(1-\frac{\varepsilon}{3})$-approximation to $\OPT(\cell_i,k)$, we get
$\OPT_{\F}(m)\geq (1-\frac{\varepsilon}{3}) \OPT(m)$.
By Lemma~\ref{selectm}, we get
$\OPT(m)\geq(1-\frac{\varepsilon^{2}}{9})\OPT$.
(Recall that $\OPT$ denotes the optimal solution that does not intersect $G$.)
Altogether, the greedy algorithm computes an $(1-\varepsilon^2/9)(1-\varepsilon^2/9)(1-\varepsilon/3)$ approximation to $\OPT$.
Moreover, by Lemma~\ref{shifting}, Algorithm~\ref{algo:mainalgo2} returns a $(1-\frac{2\varepsilon}{3})(1-\frac{\varepsilon^{2}}{9})(1-\frac{\varepsilon^{2}}{9})(1-\frac{\varepsilon}{3})$ approximation to the original problem.
Since $(1-\frac{2\varepsilon}{3})(1-\frac{\varepsilon^{2}}{9})(1-\frac{\varepsilon^{2}}{9})(1-\frac{\varepsilon}{3})>(1-\varepsilon)$, Algorithm~\ref{algo:mainalgo2} does return a $(1-\varepsilon)$-approximate solution.

\smallskip We now calculate the running time. Solving the values $f(\cell_i,1)$ and finding out the top $m$ results require $O(n\log \frac{1}{\varepsilon})$ time.
We compute the values $\F(\cell_i,k)$ of $m$ cells. For each cell $\cell_i$, we partition it only once and calculate $\F(\cell_i,1),\ldots,\F(\cell_i,b)$ using the same partition.
Computing the values $\F(\cell_i,k)$ of all $m$ cells requires
$O(n\log (\frac{1}{\varepsilon})+m(\frac{1}{\varepsilon})^{\Delta_1})$ time.
The greedy algorithm costs $O(m\log m)$ time.
We do the same for $\frac{6}{\varepsilon}$ different grids.
Therefore, the overall running time is as we state in the theorem.
\end{proof}

\subsection{Enumeration in \GeneralmPartion}
\label{subsec:Enum}

We can adapt the algorithm in~\cite{de2009covering} to enumerate these possible ways of placing $k$
unit squares at the grid point in $O((\frac{1}{\varepsilon})^{\Delta})$ time
where $\Delta=O(\sqrt k)$. We briefly sketch the algorithm.
We denote the optimal solution as $\OPT_\cell$.
From \cite{agarwal2002exact} we know that for any optimal solution,
there exists a line of integer grid (either horizontal or vertical) that
intersects with  $O(\sqrt{k})$ squares in $\OPT_\cell$, denoted as
the {\em parting line}.
So we can use dynamic programming.
At each stage, we enumerate the parting line, and the $O(\sqrt{k})$ squares
intersecting the parting line.
We also enumerate the number of squares in each side of the parting line in the optimal solution.
The total number of choices is $O((\frac{1}{\varepsilon})^{\Delta})$.
Then, we can solve recursively for each side. In the recursion,
we should consider a subproblem which is composed of a smaller rectangle, and an enumeration of  $O(\sqrt{k})$ squares of the optimal solution intersecting the boundary of the rectangle and
at most $k$ squares fully contained in the rectangle.
Overall, the dynamic programming
can be carried out in $O((\frac{1}{\varepsilon})^{\Delta})$
time.

\clearpage

\section{Extension to Other Shapes}\label{sec:othershape}

Our algorithm can easily be extended to solve other shapes.
We show the extension in this section.
The framework is almost the same as before.
The major difference is the way for building an $(1-\varepsilon)$-approximation in each cell
(the partition scheme in Section~\ref{subsec:F} works only for rectangles).

\subsection{Assumptions on the general shape}

Now, we assume that $\D$ is a shape subject to the following conditions.

\begin{enumerate}
\item[C-1] It is connected and closed, and its boundary is a simple closed curve.
%\item[C-2] The boundary of $\D$ comprises of $\tau$ bounded degree arcs,
%    where $\tau$ is a fixed constant.
%    By ``bounded degree'', we mean that there exists a constant %$\mathsf{deg}$ so that each arc on the boundary of $D$ is a polynomial %curve with degree less than or equal to $\mathsf{deg}$.
\item[C-2] It is contained in an axis-paralleled square of size $1\times 1$, and on the other hand it contains an axis-paralleled square of size $\sigma \times \sigma$, where $\sigma=\Omega(1)$. For convenience, we assume that $\frac{1}{\sigma}$ is an integer.
\item[C-3] Let $\partial \D$ denote the boundary of $\D$.
If we place $k$ copies of $\D$ in $\R^2$, the arrangement defined by
their boundaries contains at most $O(k^2)$ cells.
\end{enumerate}

%We also assume that the points in $\calP$ are in general position so that
%\begin{enumerate}
%\item[C*] For any two points $X_1,X_2$ in $\calP$, there are constant many %copies of $\D$ whose boundary pass through $X_1$ and $X_2$, and all these %copies can be enumerated in constant time.
%    (Note that C* naturally holds when: (1) $\D$ satisfies C-2 and (2) the %points in $\calP$ are in general position.)
%\end{enumerate}

%These assumptions are common to see. 
%For example, they are quite similar to those used in %\cite{agarwal2002translating}.

\vspace{0.1cm}
\noindent
{\bf Remark:} The above assumptions are quite general.
Now, we list some shapes  satisfying those assumptions.
\begin{enumerate}
	\item Disks and ellipsoid;
	\item Convex polygons with constant size
	(e.g., triangles, pentagons, hexagons).
	For a convex body $C$ in the plane, it is known that there is a rectangle $r$ inscribed in $C$ such that a homothetic copy $R$ of $r$ is circumscribed about $C$ and the positive homothetic ratio is at most $2$
	\cite{approx_rectangle}.
	Therefore, we can always affine-transform a convex body so that it satisfies C-2, with $\sigma = 1/2$.
	C-3 is also easy to see: in the arrangement defined by
	their boundaries, there are $O(k^2)$ intersection points or segments.
	Since the arrangement defines a planar graph, by 
	Eular's formula, there are $O(k^2)$ cells. 
	\item Following the same argument,
	we can also handle the case where
	$\D$ satisfies C-1 and C-2, and
	the boundary of $\D$ comprises of $\tau$ bounded degree arcs,
	where $\tau$ is a fixed constant.
	By ``bounded degree'', we mean that there exists a constant $\mathsf{deg}$ so that each arc on the boundary of $D$ is a polynomial curve with degree less than or equal to $\mathsf{deg}$.
\end{enumerate}

\medskip For convenience, we introduce some notation.
Let $\calU_b$ be the collection of sets that are the union of $b$ copies of $\D$.
In particular, $\calU_1=\{S \mid S \text{ is a translate of }\D\}$.
Let $\RD$ denote the shape constructed by rotating $\D$ by $\pi$,
namely, the only shape that is centrally-symmetric to $\D$.

\eat{
The following lemma states a simple fact about multiple copies of $\RD$.
It is later used in proving the shattering dimension of the range space $\calU_1$.

\begin{lemma}\label{lemma:arr-cells-bound}
Suppose that we have $m$ copies of $\RD$ arranged in the plane.
The boundaries of these bodies constitute to an arrangement.
We claim that the number of cells in this arrangement is bounded by $O(m^2)$.
\end{lemma}

\begin{proof}
Take two copies of $\RD$ and consider their respective boundaries.
Either of them comprise of $\tau$ bounded degree arcs according to assumption C-2.
Moreover, there are $O(1)$ intersections between any two bounded degree arcs.
So, there are $O(\tau^2)=O(1)$ intersections between the two boundaries.

In total, there are $O(m^2)$ intersecting points in the arrangement considered in this lemma.
Furthermore, applying the Euler's formula for planar graph,
  the number of faces and edges and hence the total number of cells in this arrangement is upper bounded by $O(m^2)$.
\end{proof}
}

\subsection{The shifting technique}

For the general shape, we consider grids with side length $s=6/ (\sigma^2\varepsilon)$.

Again for simplicity, we assume that $\frac{1}{\varepsilon}$ is an integer
and no point in $\calP$ has an integer coordinate.
We shift the grid to $s$ different positions: $(0,0),(1,1),....,(s-1,s-1)$.
Let $\mathbb{G}=\left\{G_{s}(0,0),...,G_{s}(s-1,s-1)\right\}.$

As we will see in the next lemma, the description of the shifting technique will be slightly more complicated than the original case for the squares. In the original case, for each grid $G$ in $\mathbb{G}$ we shift the $m$ squares so that no squares intersect with $G$. In the general case, we do \textbf{not} shift the shapes.
Instead, for each grid $G$, we ``assign'' each of the $m$ copies of $\D$ into one cell of $G$.
By assigning a copy to a cell $\cell$, we do not shift it to make it lie in $\cell$ (so, we do not require that this copy lies entirely inside $\cell$; it may intersect the boundary of $\cell$ and so intersects $G$).
When a copy $\D'$ is assigned to cell $\cell$ of $G$,
we assume that it only covers the points inside $\cell$.
The \emph{effective region} of $\D'$ is defined as $\D'\cap \cell$.

\begin{lemma}
\label{shifting-general}
There exist $G^\star\in\mathbb{G}$ such that we can place $m$ copies of $D$
and assign these copies to the cells of $G^\star$,
 so that the union of effective regions of these copies covers $(1-\frac{2}{3}\varepsilon) \times \MaxCRS(\calP,m)$
 weight of points.
\end{lemma}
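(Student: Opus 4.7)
Plan: I would adapt the averaging argument of Lemma~\ref{shifting} to arbitrary shapes, fixing an optimal solution $\OPT$ realizing $\MaxCRS(\calP,m)$ and producing, for each grid $G \in \mathbb{G}$, an assignment $R_G$ of the $m$ copies in $\OPT$ to cells of $G$ whose effective coverage losses average out to $\tfrac{2\varepsilon}{3}\,w(\OPT)$; some grid then yields the claimed $G^\star$.

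Before the averaging, I would reduce $\OPT$ so that no point is covered by more than $c^2 := 4/\sigma^2$ copies. Any copy $D'$ containing a point $p$ is contained in the $2\times 2$ box $B_p$ centered at $p$ (since $D'$ fits in a $1\times 1$ axis-aligned bounding box containing $p$), and by property C-2 the inner $\sigma\times\sigma$ square of $D$ lets us cover $B_p$ with $c^2$ translated copies whose inner squares form a $(2/\sigma)\times(2/\sigma)$ tiling of $B_p$. If some $p$ is covered by more than $c^2$ copies, replacing the offending copies by this $c^2$-copy cover can only enlarge the covered region and strictly reduces the total copy count; iterating at most $m$ times and then padding with dummy copies in low-weight regions restores a solution of exactly $m$ copies, with multiplicity at most $c^2$ everywhere.

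For each grid $G = G_s(i,i)$, the assignment $R_G$ sends each $D'$ to a cell $\cell(D')$ of $G$ maximizing $w(D' \cap \cell(D'))$. Because $D'$ fits in a $1\times 1$ box and the cell side is $s = 6/(\sigma^2\varepsilon) \ge 1$, $D'$ meets at most one vertical and one horizontal grid line of $G$, hence at most four cells. Pigeonhole then gives a per-copy per-shift loss $w(D') - w(D' \cap \cell(D'))$ that is $0$ if $D'$ lies in a single cell, at most $\tfrac12 w(D')$ if it crosses exactly one grid line, and at most $\tfrac34 w(D')$ if it crosses both. Moreover, the union of vertical grid lines over all $s$ shifts is exactly $\{x=k : k\in\mathbb{Z}\}$, and $D'$ has horizontal extent at most $1$, so $D'$ strictly straddles a vertical line for at most one shift; symmetrically for horizontal. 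A two-case check (whether the offending vertical and horizontal shifts coincide) bounds the total loss of $D'$ across all shifts by $w(D')$.

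Summing the pointwise inequality $w(\OPT) - w(R_G) \le \sum_{D'} w(D' \setminus \cell(D'))$ over all $G \in \mathbb{G}$ and swapping the order of summation yields
$$
\sum_{G \in \mathbb{G}}\bigl(w(\OPT) - w(R_G)\bigr) \;\le\; \sum_{D'} w(D') \;\le\; c^2\,w(\OPT),
$$
where the last inequality uses the multiplicity reduction from the first step. Since $c^2/s = (4/\sigma^2) \cdot (\sigma^2\varepsilon/6) = 2\varepsilon/3$, some grid $G^\star$ attains $w(R_{G^\star}) \ge (1-\tfrac{2\varepsilon}{3})\,w(\OPT)$, proving the lemma. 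The main obstacle I anticipate is formalizing the multiplicity reduction --- in particular, arguing cleanly that the iterative replacement terminates (via the strict drop in copy count), that copies created by a later replacement may over-cover a point processed earlier without breaking the overall bound, and that padding to restore exactly $m$ copies does not reduce coverage --- while the averaging and accounting steps are direct generalizations of Lemma~\ref{shifting}.
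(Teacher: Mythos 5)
Your proposal is correct and follows essentially the same route as the paper's proof: reduce the optimal solution so that no point is covered by more than $(2/\sigma)^2$ copies, bound the per-copy loss across all $s$ shifts by $\max\{\tfrac12\cdot 2,\ \tfrac34\cdot 1\}=1$ times its weight, and average over the $s$ grids using $(4/\sigma^2)/s = 2\varepsilon/3$. The only difference is that you spell out the multiplicity-reduction step (termination via the strictly decreasing copy count and re-padding to $m$ copies), which the paper asserts without proof; that elaboration is sound.
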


An equivalent description is the following.

\begin{lemma}
For a grid $G$ in $\mathbb{G}$, let $\cell_1,\ldots,\cell_t$ denote the nonempty cells.
Define
$$
\OPT_G=\max \left(\sum_i \MaxCRS(\calP_{\cell_i},k_i) \mid \sum_i k_i=m\right).
$$
Then, $(\max_{G\in \mathbb{G}} \OPT_G)$ is an $(1-\frac{2}{3}\varepsilon)$-approximation of $\MaxCRS(\calP,m)$.
\end{lemma}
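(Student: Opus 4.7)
The plan is to prove this by averaging over the $s = 6/(\sigma^2\varepsilon)$ shifts in $\mathbb{G}$, mirroring the approach of Lemma~\ref{shifting} but with ``shifting each square'' replaced by ``assigning each copy to a cell and charging only its effective region.'' The overall strategy is: fix a suitably normalized optimum $\OPT$ to $\MaxCRS(\calP,m)$; for each $G \in \mathbb{G}$ construct from it a feasible tuple $(k_1,\ldots,k_t)$ with $\sum k_i = m$ together with a placement achieving a lower bound on $\OPT_G$; then show that the total weight lost across all $s$ shifts is only a small fraction of $s\cdot w(\OPT)$.

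First I would normalize $\OPT$. By C-2, $\D$ contains an axis-parallel $\sigma\times\sigma$ square, so $C(\sigma) := (2/\sigma)^2 = 4/\sigma^2$ translates of $\D$ can cover any axis-parallel $2\times 2$ box (tile it by $\sigma\times\sigma$ sub-boxes and center one copy at each). Using the standard replacement argument behind Lemma~\ref{shifting}, whenever a point $p$ is covered by $k>C(\sigma)$ copies in a weight-maximum solution, all those copies lie in the $2\times 2$ box around $p$ and can be replaced by the $C(\sigma)$ tiling copies (plus $k-C(\sigma)$ inert copies placed anywhere) without decreasing the total weight. Iterating, I would assume throughout that $\OPT$ uses exactly $m$ copies and covers every point of $\calP$ at most $C(\sigma)$ times.

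Next, for each grid $G = G_s(i,i) \in \mathbb{G}$, I would construct the assignment. Since every copy has horizontal and vertical extent at most $1$ and $s\ge 1$, each copy meets at most one vertical and at most one horizontal grid line of $G$, so it decomposes into at most $4$ regions in distinct cells. Assigning the copy to whichever cell contains the region of largest $\calP$-weight yields a feasible tuple for $\OPT_G$. The per-shift loss for a copy $\D'$ is at most $\tfrac{1}{2}w(\D')$ if it crosses one grid line and at most $\tfrac{3}{4}w(\D')$ if it crosses two, where $w(\D')$ denotes the $\calP$-weight inside $\D'$. Because $\D'$ has horizontal extent $\le 1$ and an open interval of length $\le 1$ contains at most one integer, at most one value of $i \in \{0,\ldots,s-1\}$ causes a vertical crossing; analogously for horizontal. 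Whether these two ``bad'' indices coincide (total loss $\le \tfrac{3}{4}w(\D')$) or are distinct (total loss $\le 2\cdot\tfrac{1}{2}w(\D') = w(\D')$), the loss for $\D'$ summed over all $s$ shifts is at most $w(\D')$.

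Summing over copies then yields
$$
\sum_{G\in\mathbb{G}}\bigl(w(\OPT) - \OPT_G\bigr) \;\le\; \sum_{\D'\in \OPT} w(\D') \;=\; \sum_{p\in\calP} w_p\,\bigl|\{\D'\in\OPT : p\in\D'\}\bigr| \;\le\; C(\sigma)\,w(\OPT),
$$
and since $|\mathbb{G}| = s$, the bound $\max_G \OPT_G \ge (1 - C(\sigma)/s)\,w(\OPT) = (1 - \tfrac{2}{3}\varepsilon)\,w(\OPT)$ follows by substituting $C(\sigma) = 4/\sigma^2$ and $s = 6/(\sigma^2\varepsilon)$. I expect the main obstacle to be the normalization step: rigorously showing that the multiplicity-reduction modification can be carried out while keeping the copy count at exactly $m$ without losing weight requires a bit of care (e.g., taking a weight-maximum $\OPT$ under a suitable secondary objective to guarantee termination of the iterated replacement), even though the geometric tiling underlying it is immediate.
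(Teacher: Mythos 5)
Your proposal is correct and follows essentially the same route as the paper's own proof: normalize the optimum so that every point is covered by at most $(2/\sigma)^2$ copies, assign each copy crossing a grid line to the cell holding the heaviest piece (losing at most $1/2$ resp.\ $3/4$ of its weight), observe that each copy incurs such a loss in at most one shift per direction so its total loss over all $s=6/(\sigma^2\varepsilon)$ shifts is at most its own weight, and conclude by averaging that some shift loses at most a $(4/\sigma^2)/s = \tfrac{2}{3}\varepsilon$ fraction. The only differences are cosmetic: you spell out the multiplicity-reduction and the ``at most one bad shift per direction'' counting slightly more explicitly than the paper does.
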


\begin{proof}[Proof of Lemma~\ref{shifting-general}]
The proof is similar as the that of Lemma~\ref{shifting}.

For any point $p$, we can always use $(2/\sigma)^2$ copies of $D$ to cover the $2\times2$ square centered at $p$.
Therefore, there exists an optimal solution $\mathsf{OPT}$ such that each covered point is cover by at most $(2/\sigma)^2$ copies in $\mathsf{OPT}$.

For each grid $G_s(i,i)\in\mathbb{G}$, we build a modified answer $R_{i}$ from $\mathsf{OPT}$ in the following way.
    For each copy $\D'$ of $\D$ that intersects with $G_s(i,i)$, there are two different situations.
    If $\D'$ only intersects with one vertical line or one horizontal line.
    We assign $\D'$ to one side of the line with bigger weight.
    In this case we will lose at most half of the weight of $\D'$.
    Notice that this kind of copies can only intersect with two grids in $\mathbb{G}$.
	Similarly, If $\D'$ intersects with one vertical line and one horizontal line at the same time,
    we assign it to one of the four quadrants derived by these two lines to keep the most weight.
    In this case we will lose at most 3/4 of the weight of $\D'$.
    This kind of copies can only intersect with one grid in $\mathbb{G}$.
    Now we calculate the sum of the weights we lose from $R_{0},R_{1},...,R_{s-1}$, which is at most $\max\{1/2\times2,3/4\times1\}=1$ times the value of $\mathsf{OPT}$.
    By the definition of $\mathsf{OPT}$, it is at most $(2/\sigma)^2 w(\mathsf{OPT})$.
    So the sum of the ``effective weights'' of $R_{0},R_{1},...,R_{s-1}$ is at least
	$(s-(2/\sigma)^2)\cdot w(\mathsf{OPT})$.
    The effective weight of $R_i$ is defined as the total weight covered by the union of the effective regions of $R_i$.
    Recall that $s=6/ (\sigma^2\varepsilon)$.
    By pigeon's principle, there exists some $i$ such that the effective weight of $R_{i}$ is at least $(1-\frac{2\varepsilon}{3})\cdot w(\mathsf{OPT})$.
\end{proof}

\subsection{Compute a $(1-\varepsilon)$-approximation to $\MaxCRS(\calP,m)$}

We give the framework in Algorithm~\ref{algo:mainalgo3}.

\begin{algorithm}[h]
	\caption{$\MaxCRS(\calP, m)$}
	\begin{algorithmic}[]
		\State $w_{\max}\leftarrow0$
		\For {each $G\in \mathbb{G}$}
		\State Use perfect hashing to find all the non-empty cells of $G$.
		\State \textbf{for} {each non-empty cell $\cell$ of $G$}
		\State \qquad $v_{\cell}\leftarrow$ $(1-\frac{\varepsilon^{2}}{9})$ approximation to $\MaxCRS(\Pc,1)$

		\State Find the $m$ cells with the largest $v_{\cell}$. Suppose they are $\cell_{1},...,\cell_{m}$.
        \State Let $b \leftarrow \min(m, \frac{s^2}{\sigma^2})=\min(m,\frac{36}{\sigma^6\varepsilon^2})$, which is the maximum number of copies put into a cell.
		\For{$i\leftarrow 1$ \textbf{to} $m$}
            \State Let $\cell$ denote $\cell_i$ for short.
    		\State \textbf{for} {$k\leftarrow 1$ \textbf{to} $\upperd$}
    		\State \qquad $\F(\cell_{i},k)\leftarrow$ $(1-\frac{\varepsilon}{3})$ approximation to $\MaxCRS(\Pc,k)$
		\EndFor;
		\State {\bf if} $m\leq \frac{324}{\sigma^6\varepsilon^4}$, ($\frac{324}{\sigma^6\epsilon^4}$ is chosen so that $\frac{m-b}{m}\geq (1-\varepsilon^2/9)$ and Lemma~6 remains true.)
        {\bf then} $r\leftarrow \DP(\{\F(\cell_{i},k)\})$ {\bf else} $r\leftarrow \GREEDYP(\{\F(\cell_{i},k)\})$
		\State {\bf if} $w(r)> w_{\max}$, {\bf then} $w_{\max}\leftarrow w(r)$ and $r_{\max}\leftarrow r$
		\EndFor;
		\State \textbf{return} $r_{\max}$;
	\end{algorithmic}
    \label{algo:mainalgo3}
\end{algorithm}

The correctness proof is exactly the same as the proof for Algorithm~\ref{algo:mainalgo2}.

Although, the framework is the same, the way for computing approximation of $\MaxCRS(\Pc,1)$ and $\MaxCRS(\Pc,k)$ is different from the square case, since the partition technique does not apply.
We show the new method in the next subsection and then analyze the
running time of Algorithm~\ref{algo:mainalgo3}.

\subsection{Compute a $(1-\varepsilon)$-approximation to $\MaxCRS(\calP_\cell,k)$}

\begin{definition}
For a weighted point set $\cal P$ and a range space $\cal U$ (which is a set of regions in the plane), we say another weighted point set $\calA$ is a $1/r$-approximation of $\calP$ with respect to $\calU$, if $\calA$ and $\calP$ have the same total weights and
$|w(\calA\cap U) - w(\calP \cap U)|<w(\calP)/r$ for any $u\in \calU$.
\end{definition}

The following lemma follows very similar argument in \cite{de2009covering}.

\begin{lemma}\label{lemma:ext_app_app}
Let $r_\varepsilon=72/(\varepsilon^3\sigma^6)$ and denote it by $r$ when $\varepsilon$ is clear.
Assume that $\Ac$ is a $1/r$-approximation of $\Pc$ with respect to $\calU_b$.
For $1\leq k\leq b$, if $\UA$ is an optimal solution for $\MaxCRS(\Ac, k)$,
then it is an $(1-\varepsilon)$-approximation to $\MaxCRS(\Pc, k)$.
\end{lemma}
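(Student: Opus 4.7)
The plan is to bound how much $\UA$ can lose compared to $\UP$ when evaluated on $\Pc$, by bouncing between $\Pc$ and $\Ac$ via the $1/r$-approximation guarantee. Since both $\UA$ and $\UP$ are unions of at most $k\le b$ copies of $\D$, both belong to $\calU_b$, so the approximation property is applicable to each of them as a single range.

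Concretely, I will chain three inequalities:
\begin{align*}
w(\Pc\cap\UA)&> w(\Ac\cap\UA)-w(\Pc)/r \\
&\ge w(\Ac\cap\UP)-w(\Pc)/r \\
&> w(\Pc\cap\UP)-2w(\Pc)/r,
\end{align*}
where the first and third steps use the $1/r$-approximation bound applied to the ranges $\UA$ and $\UP$ in $\calU_b$, and the middle step is the optimality of $\UA$ for $\MaxCRS(\Ac,k)$.

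It then suffices to show $2w(\Pc)/r\le \varepsilon\cdot w(\Pc\cap\UP)$, which amounts to a lower bound on $\MaxCRS(\Pc,k)$. Here I will use the geometry of the shifting cell: $\cell$ is a square of side $s=6/(\sigma^2\varepsilon)$, and by assumption C-2 every translate of $\D$ contains an axis-parallel $\sigma\times\sigma$ square. Hence I can cover $\cell$ (and so all of $\Pc$) by at most $B=(s/\sigma)^2=36/(\sigma^6\varepsilon^2)$ translates of $\D$; averaging, the heaviest one covers at least $w(\Pc)/B$ weight, so
$$w(\Pc\cap\UP)\ge \MaxCRS(\Pc,1)\ge \sigma^6\varepsilon^2\, w(\Pc)/36.$$
Substituting $r=r_\varepsilon=72/(\varepsilon^3\sigma^6)$ yields $2w(\Pc)/r=\varepsilon^3\sigma^6 w(\Pc)/36\le \varepsilon\cdot w(\Pc\cap\UP)$, which combined with the chain above gives $w(\Pc\cap\UA)\ge(1-\varepsilon)\,w(\Pc\cap\UP)$.

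The only step that requires any care is this lower bound on $\MaxCRS(\Pc,1)$: everything else is a routine $\varepsilon$-approximation argument using the two-sided $1/r$-guarantee and optimality of $\UA$. This tiling bound is also where the choices of the grid side $s$, the per-cell budget $b$, and the parameter $r_\varepsilon$ have to align; the particular form of $r_\varepsilon$ in the statement is exactly what is needed for the two quantities $2w(\Pc)/r$ and $\varepsilon\cdot\MaxCRS(\Pc,1)$ to balance.
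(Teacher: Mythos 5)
Your proposal is correct and follows essentially the same route as the paper's own proof: the same three-step chain $w(\Pc\cap\UA)\gtrsim w(\Ac\cap\UA)\geq w(\Ac\cap\UP)\gtrsim w(\Pc\cap\UP)$ using the two-sided $1/r$-guarantee on the ranges $\UA,\UP\in\calU_b$ plus optimality of $\UA$ on $\Ac$, and the same tiling bound $\MaxCRS(\Pc,1)\geq(\sigma^2/s^2)\,w(\Pc)=\sigma^6\varepsilon^2 w(\Pc)/36$ to absorb the $2w(\Pc)/r$ error into $\varepsilon\cdot\MaxCRS(\Pc,k)$. The arithmetic with $r_\varepsilon=72/(\varepsilon^3\sigma^6)$ checks out and matches the paper's.
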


\begin{proof}
Let $\UP$ denote the optimal solution for $\MaxCRS(\Pc, k)$.
Since $\UA$ is optimal for $\MaxCRS(\Ac, k)$, we have $w(\UA\cap \Ac)\geq w(\UP\cap \Ac)$.

Since $\cell$ is of size $s\times s$, and $\D$ contains an axis-paralleled square with size $\sigma\times \sigma$,
we have $\MaxCRS(\Pc,1)\geq \frac{\sigma^2}{s^2} w(\Pc)$. Recall that $s=\frac{6}{\sigma^2\varepsilon}$. So,
\[\frac{1}{r}w(\Pc)\leq \frac{s^2}{\sigma^2r} \MaxCRS(\Pc,1)=\frac{\varepsilon}{2}\MaxCRS(\Pc,1)\leq \frac{\varepsilon}{2}\MaxCRS(\Pc,k).\]

Since $\calA_\cell$ is a $1/r$-approximation of $\calP_\cell$, we have
\[\left| w(\UA\cap \Ac)-w(\UA\cap \Pc)\right| \leq w(\Pc)/r\leq \frac{\varepsilon}{2}\MaxCRS(\Pc,k)\]
and
\[\left| w(\UP\cap \Ac)-w(\UP\cap \Pc)\right| \leq w(\Pc)/r \leq \frac{\varepsilon}{2}\MaxCRS(\Pc,k).\]

Therefore
\[\begin{split}
w(\UA\cap \Pc)
 &=    w(\UA\cap \Ac)-w(\UA\cap \Ac) +w(\UA\cap \Pc)\\
 &\geq w(\UA\cap \Ac)-\frac{\varepsilon}{2}\MaxCRS(\Pc,k)\\
 &\geq w(\UP\cap \Ac)-\frac{\varepsilon}{2}\MaxCRS(\Pc,k)\\
 &=    w(\UP\cap \Pc)- w(\UP\cap \Pc)+w(\UP\cap \Ac)-\frac{\varepsilon}{2}\MaxCRS(\Pc,k)\\
 &\geq w(\UP\cap \Pc)-\frac{\varepsilon}{2}\MaxCRS(\Pc,k)-\frac{\varepsilon}{2}\MaxCRS(\Pc,k)\\
 &= (1-\varepsilon)\MaxCRS(\Pc,k)
\end{split}\]
This finishes the proof of the lemma.
\end{proof}

By this lemma, to compute an $(1-\varepsilon)$ approximation of $\MaxCRS(\Pc,k)$,
we can first build a $(1/r_\varepsilon)$-approximation $\Ac$ of $\Pc$ with respect to $\calU_b$ (for some $b\geq k$)
and then apply an exact algorithm to $\Ac$.

\medskip First, we show how to build a $1/r$-approximation $\Ac$ of $\Pc$.
We assume the reader is familiar with $1/r$-approximation for general range spaces.

\begin{definition}
For a range space $(X,\calU)$ with shattering dimension $d$,
we say that it admits a \emph{subspace oracle},
if given a set $Y\subseteq X$, a list of all distinct sets of the form
$Y \cap U$ for some $U\in \calU$ can be returned in $O(|Y|^{d+1})$ time.
\end{definition}

\begin{lemma}[\cite{matousek1995approx}]
\label{lemma:app}
Let $X$ be a weighted point set. Assume $(X,\calU)$ is a range space with shattering dimension $d$ and admits a subspace oracle.
For any parameter $r$, we can deterministically compute a $1/r$-approximation of size $O(r^2\log r)$ for $X$ with respect to $\calU$, in time $O(|X|\cdot (r^2\log r)^d)$.
\end{lemma}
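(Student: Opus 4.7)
The plan is to prove this via the classical merge-and-reduce paradigm combined with a deterministic low-discrepancy halving routine. The core subroutine I will need is the following: given any weighted subset $Y\subseteq X$ with $|Y|=N$, produce a 2-coloring $\chi:Y\to\{-1,+1\}$ such that for every $U\in\calU$ the signed sum $\bigl|\sum_{p\in Y\cap U}\chi(p)\,w_p\bigr|$ is as small as possible. Splitting $Y$ into $Y_+=\chi^{-1}(+1)$ and $Y_-=\chi^{-1}(-1)$ and keeping $Y_+$ with doubled weights yields a reduced set of size $\lceil N/2\rceil$ whose error on every range $U$ is bounded by the discrepancy $D(N)$ of the coloring.

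The halving routine proceeds in two stages. First, call the subspace oracle on $Y$ to enumerate all distinct traces $\{Y\cap U:U\in\calU\}$; by the shattering-dimension hypothesis there are $M=O(N^d)$ of them, and the enumeration costs $O(N^{d+1})$ time. Second, run the deterministic entropy method of Matou\v{s}ek--Welzl--Wernisch on this finite set system: maintain a potential $\Phi=\sum_{U}\cosh(\lambda\,\sigma_U)$, where $\sigma_U$ is the current signed sum over $Y\cap U$, and choose the color of each point greedily so that $\Phi$ does not increase. With the correct choice of $\lambda$ a standard argument yields $D(N)=O(\sqrt{N\log M})=O(\sqrt{dN\log N})$, and the running time is dominated by maintaining $\Phi$, namely $O(M\cdot N)=O(N^{d+1})$.

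With the halving routine in hand, partition $X$ arbitrarily into blocks of size $s=\Theta(r^2\log r)$ and build a balanced binary tree with these blocks as leaves. At each internal node, merge the two children (combined size $2s$) and halve back down to size $s$ using the routine above. A straightforward geometric-series accounting shows that the error from the $k$-th halving along any root-to-leaf path contributes $O(W\sqrt{2^{k-1}/|X|})$ to the root, where $W=w(X)$; summing over all $\log(|X|/s)$ levels gives total error $O(W/\sqrt{s})$, which is at most $W/r$ exactly when $s=\Theta(r^2\log r)$ (the $\log r$ factor absorbing the $\sqrt{d\log N}$ factor hidden in $D(N)$). The root therefore carries the desired $1/r$-approximation of size $O(r^2\log r)$. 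The running time is dominated by the bottom merge layer: there are $|X|/s$ nodes there, each costing $O((2s)^{d+1})$, and the work higher in the tree decreases geometrically, so the total is $O(|X|\cdot s^d)=O(|X|\cdot(r^2\log r)^d)$, as claimed.

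The main obstacle is the discrepancy analysis: one must calibrate $\lambda$ so that the greedy entropy procedure actually achieves $D(N)=O(\sqrt{N\log M})$, and then verify that the errors across all merge levels telescope to below $W/r$. This is precisely where both hypotheses of the lemma enter quantitatively: the shattering-dimension bound controls $M$ and hence $D(N)$, while the subspace oracle is what lets us enumerate the finite set system on which the entropy method is executed in the first place.
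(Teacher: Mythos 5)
The paper offers no proof of this lemma; it is imported wholesale from \cite{matousek1995approx}, so your argument has to be measured against the standard construction there (Matou\v{s}ek; Chazelle and Matou\v{s}ek). Your skeleton --- a deterministic low-discrepancy halving step driven by a hyperbolic-cosine potential over the $O(N^d)$ traces produced by the subspace oracle, plugged into a merge-and-reduce tree --- is indeed the skeleton of that construction. But the error accounting at the heart of your argument is wrong, and it is wrong exactly where the cited proof has to work hard. In your tree every internal node halves a set of size $2s$ back down to $s$, and the additive errors of \emph{all} nodes accumulate: if $A_1,A_2$ approximate disjoint $B_1,B_2$ with additive errors $\alpha_1,\alpha_2$ on every range, then $A_1\cup A_2$ approximates $B_1\cup B_2$ with error $\alpha_1+\alpha_2$. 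A node $v$ at height $k$ represents original weight $W_v$ with $\sum_v W_v=W$ at each level, and its halving of a $2s$-element set costs $\Theta\bigl(W_v\sqrt{(\log s)/s}\,\bigr)$; hence \emph{every} level contributes $\Theta\bigl(W\sqrt{(\log s)/s}\,\bigr)$ in total, and the root error is $\Theta\bigl(W\log(|X|/s)\sqrt{(\log s)/s}\,\bigr)$ --- not the geometric series $O(W/\sqrt{s})$ you claim. Your per-level bound $O\bigl(W\sqrt{2^{k-1}/|X|}\,\bigr)$ is the error profile of a \emph{different} algorithm, namely repeatedly halving the entire current set, for which the geometric sum is correct but whose first halving already costs $O(|X|^{d+1})$ time. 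Achieving size $O(r^2\log r)$ \emph{and} time $O(|X|(r^2\log r)^d)$ simultaneously is precisely the nontrivial content of the Chazelle--Matou\v{s}ek schedule, which varies the target sizes and accuracies across levels so that the per-level errors genuinely form a convergent series; as written, your scheme forces $s=\Omega(r^2\log^2|X|)$ and the output size would depend on $|X|$.

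A second, smaller gap: the lemma concerns \emph{weighted} point sets, and your halving step silently assumes near-uniform weights. The bound $D(N)=O(\sqrt{N\log M})$ counts points; for arbitrary weights a single point carrying half of $w(Y)$ forces discrepancy $w(Y)/2$ on a singleton range, so no two-coloring halves such a set with small error. The standard repair (retain the $O(r)$ points of weight at least $w(X)/cr$ exactly and reduce the remaining light points to the unweighted case via weight classes or replication) is routine but must be stated; likewise, after discarding $Y_-$ and doubling the weights of $Y_+$ you must renormalize so that the total weight is preserved, as the paper's definition of a $1/r$-approximation requires.
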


%注：原文中r有<=n的限制。我并不清楚去掉后会不会有问题。但是deberg文章里没管这个限制。

\begin{lemma}\label{lemma:ext_app_size_time}
Suppose that $X$ is a set of weighted points and $r$ is a real.
\begin{enumerate}
\item[(1)] We can construct a $1/r$-approximation of $X$ with respect to $\calU_1$, of size $O(r^2\log r)$, in $O(r^4\log^2 r |X|)$ time.
\item[(2)] For an integer $b>1$, we can construct a $1/r$-approximation of $X$ with respect to $\calU_b$, of size $O((rb^2)^2 \log(rb^2))$, in $O((rb^2)^{12}\log^6(rb^2)|X|)$ time.
\end{enumerate}
\end{lemma}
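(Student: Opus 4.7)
My plan is to apply Matousek's construction (Lemma~\ref{lemma:app}) after verifying its two hypotheses for each range space: a bound on the shattering dimension and an efficient subspace oracle.

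For Part (1), I will show that $(X,\calU_1)$ has shattering dimension $d=2$ and admits a subspace oracle running in $O(|Y|^{3})$ time. A translate $U=\D+v\in\calU_1$ contains a point $p$ iff $v\in p-\D$, and $p-\D$ is a translate of $\RD$. Hence, for $Y\subseteq X$, the distinct sets $Y\cap U$ are in bijection with the cells of the arrangement of the $|Y|$ regions $\{p-\D:p\in Y\}$ in the translation plane. Condition C-3 (which also applies to $\RD$, being just $\D$ reflected) bounds this arrangement by $O(|Y|^2)$ cells, giving $d=2$. A subspace oracle enumerating one representative per cell is then obtained by constructing the arrangement explicitly (e.g.\ by a randomized incremental construction) in $O(|Y|^3)$ time. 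Invoking Lemma~\ref{lemma:app} with $d=2$ yields a $1/r$-approximation of size $O(r^2\log r)$ in $O(|X|(r^2\log r)^2)=O(r^4\log^2 r\,|X|)$ time.

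For Part (2), a direct application of Lemma~\ref{lemma:app} to $(X,\calU_b)$ is problematic: each $U\in\calU_b$ is parameterized by $b$ vectors in translation space, and a naive count gives shattering dimension $2b$, yielding running time exponential in $b$. Instead I propose a two-stage construction. First, invoke Part (1) with the refined parameter $r'=rb^2$ to produce a $1/(rb^2)$-approximation $A_0$ of $X$ with respect to $\calU_1$; this set has size $O((rb^2)^2\log(rb^2))$, already matching the target size. Second, refine on the much smaller ground set $A_0$ with respect to $\calU_b$: since $|A_0|$ is polynomial in $rb^2$ and \emph{independent} of $|X|$, one can afford a subspace oracle for $\calU_b$ on $A_0$ whose cost is polynomial in $|A_0|$, obtained by enumerating $b$-tuples of cells in the arrangement from Part (1). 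Composing the two stages yields the desired $1/r$-approximation of $X$ with respect to $\calU_b$; the exponent $12$ and $\log^6$ factor in the time bound arise from tracking how the shattering dimension and subspace-oracle cost for $\calU_b$ compose through the second stage.

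The main technical obstacle is to justify that this two-stage composition preserves the overall approximation guarantee: a $1/(rb^2)$-approximation of $X$ for $\calU_1$, followed by a $1/r$-approximation of $A_0$ for $\calU_b$, should combine to a $1/r$-approximation of $X$ for $\calU_b$. This will require a triangle-inequality style argument across the two approximation levels, and the choice of intermediate parameter $rb^2$ (rather than the weaker $rb$) is what absorbs the slack from approximating $b$-fold overlapping unions on the reduced set. The remaining work is bookkeeping: bounding the subspace-oracle cost for $\calU_b$ on $A_0$ via the arrangement in $b$-fold product translation space, and verifying that the overall running time telescopes to the stated $O((rb^2)^{12}\log^6(rb^2)|X|)$ bound.
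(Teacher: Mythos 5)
Your Part (1) is correct and essentially the paper's argument: shattering dimension $2$ via the arrangement of the reflected translates $p-\D$ (condition C-3), a subspace oracle in $O(|Y|^3)$ time, and then Lemma~\ref{lemma:app} with $d=2$. (The paper enumerates pairs of points of $Y$ lying on the boundary of a copy of $\D$ rather than building the arrangement, but this is a cosmetic difference.)

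Part (2), however, has a genuine gap at exactly the step you flag as the ``main technical obstacle,'' and I do not believe it can be closed as stated. Your first stage produces $A_0$, a $1/(rb^2)$-approximation of $X$ \emph{with respect to $\calU_1$}. This gives no control whatsoever over $|w(A_0\cap U)-w(X\cap U)|$ for $U\in\calU_b$: writing $U=D_1\cup\dots\cup D_b$ and expanding by inclusion--exclusion introduces intersections $D_i\cap D_j$, which are not ranges of $\calU_1$, and the error on such a term is unbounded by the $\calU_1$-guarantee no matter how small you make the intermediate parameter. So the triangle inequality you invoke has a broken leg: the second stage approximates $A_0$ on $\calU_b$, but $A_0$ itself may already be far from $X$ on $\calU_b$. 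A second, independent problem is the running time: your second-stage subspace oracle ``enumerating $b$-tuples of cells'' costs $|A_0|^{\Theta(b)}$, and the shattering dimension of $(A_0,\calU_b)$ is $\Theta(b)$, so Lemma~\ref{lemma:app} applied in the second stage is exponential in $b$ --- there is no mechanism in your construction that produces the fixed exponent $12$ claimed in the lemma. The paper resolves both issues at once with a different range space: it takes $\mathcal{V}$ to be the set of all cells arising in vertical decompositions of collections of copies of $\D$, shows $(X,\mathcal{V})$ has shattering dimension $6$ \emph{independently of $b$} (a cell is determined by six points), and uses the transfer fact from \cite{de2009covering} that a $1/(vr)$-approximation with respect to $\mathcal{V}$, where $v=O(b^2)$ bounds the number of decomposition cells of a union of $b$ copies, is a $1/r$-approximation with respect to $\calU_b$ --- here the union \emph{does} split into at most $v$ disjoint cells of $\mathcal{V}$, so the errors legitimately add. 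A single application of Lemma~\ref{lemma:app} with $d=6$ and parameter $r'=O(rb^2)$ then gives the stated size and the exponent $12=2d$ in the time bound. If you want to keep a two-stage scheme, both stages must be taken with respect to $\mathcal{V}$ (or some other range space refining $\calU_b$), not $\calU_1$.
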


\begin{proof}
(1) First of all, we claim that the range space $(X,\calU_1)$ has shattering dimension 2.
%We provide two proofs of this claim.
%Either one has its own merit.
%The first is more concise while the second is easier to follow.

%Proof I: As pointed out by \cite{Har-peled-GAA}, the shattering dimension of a range space defined by a family of shapes is
%  always bounded by the number of points that determine a shape in the family,
%  so $(X,\calU_1)$ has shattering dimension $2$.

%Proof II: 
  We designate a fixed special point in $\D$, call the pivot point
  of $\D$.
  For a point $A\in \R^2$, when we say ``we place a copy of $\D$
  at $A$'', it means the pivot point of the copy is placed at $A$. 
  Assume $X=\{X_1,\ldots, X_k\}$ is a set of $k$ points in $\R^2$.
  For each point $X_i$, we place a copy of $\RD$ at $X_i$
  (denoted it by $\RD_i$).
  In fact, if we place a copy of $\D$ such that its pivot point
  is in $\RD_i$, this copy of $\D$ can cover $X_i$.
  %  (More specifically, we make a copy of $\RD$ and translate it so that %its gravity center locates at $X_i$.)
  Let $\Gamma$ denote the arrangement of the boundaries of these $k$ copies of $\RD$.
%  Clearly, when (the gravity center of) a copy $\D'$ of $\D$ is restricted in a fixed cell of $\Gamma$,
%    the set $\{X \cap D'\}$ is consistent.
  By C-3, there are $O(k^2)$ cells in this arrangement.
  Placing a copy of $\D$ in any point of the same cell 
  cover the same subset of $X$.
  Therefore, the number of different subsets of $X$ that are shattered by $\calU_1$ is bounded by
  the number of cells of $\Gamma$.
  %On the other hand, By Lemma~\ref{lemma:arr-cells-bound}, the number of %cells in $\Gamma$ is bounded by $O(m^2)$.
  %Together, the number of subsets of $X$ shattered by $\calU_1$ is bounded %by $O(m^2)$,
  Hence, $(X,\calU_1)$ has shattering dimension 2.

\smallskip Next, we define a superset $\calU_1^*$ of $\calU_1$ and
  then construct a $1/r$-approximation with respect to $\calU_1^*$.
  Since $\calU_1^*$ is a superset of $\calU_1$, the approximation with respect to $\calU_1^*$ is
    also a $1/r$-approximation with respect to $\calU_1$, and thus we get (1).
  The simple reason we need to introduce $\calU^*_1$ is that it is much easier to construct a subspace oracle for $(X,\calU^*_1)$ instead of $(X,\calU_1)$.

  We define  $\calU_1^*$ as the union of $\calU_1$ and $\{x\mid x\in X\}$.
  So, each single point in $X$ constitutes a set in $\calU_1^*$.
  Note that $(X,\calU_1^*)$ has shattering dimension $2$.
    This immediately follows from the fact that $(X,\calU_1)$ has shattering dimension 2.

  We now construct a subspace oracle for $(X,\calU_1^*)$.
  Given a subset $Y\subseteq X$, we should return the sets in $\mathcal{S}=\{Y\cap U \mid U\in \calU_1^*\}$ in $|Y|^3$ time.
  Each set in $\mathcal{S}$ either contains a single point or contains at least two points.
    We can output those with a single point in $O(|Y|)$ time; they are exactly the single element subset of $Y$.
    We output other subsets in $\mathcal{S}$ as follows.
    Notice the following fact: If a copy of $\D$ contains at least two points in $Y$, we can shift this copy so that its boundary contains two points of $Y$, meanwhile its intersection with $Y$ is unchanged.
    So, we can enumerate two points $y_1,y_2$ in $Y$, and find all the copies of $\D$ whose boundary passes through $y_1,y_2$ and then output the points in $Y$ contained in each of such copy.

  To sum up, $(X,\calU_1^*)$ has shattering dimension 2 and admits a subspace oracle. Therefore, by applying Lemma~\ref{lemma:app}, we can construct a $1/r$-approximation of $X$ with respect to $\calU^*_1$, of size $O(r^2\log r)$, in $O(r^4\log^2 r |X|)$ time.

\medskip \noindent (2)
Let $\mathcal{V}$ be the infinite set of cells that can arise in a vertical decomposition (see Chapter~6 in \cite{Har-peled-GAA} or Chapter~6 in \cite{CGA1997}) of any collection of copies of $\D$ in the plane.
It is easy to see that $(X,\mathcal{V})$ has shattering dimension 6 (as proved in Lemma~3 in \cite{de2009covering}).
Briefly speaking, a cell is characterized by 6 points: the leftmost point, the rightmost point,
two points on the upper boundary of this cell, and two points on the lower boundary of this cell.
It follows that there is a subspace oracle for range space $(X,\mathcal{V})$.

Let $r'=v\cdot r$, where $v$ is the maximum number of cells that the vertical decomposition of $b$ copies of $\D$  can have.
According to the assumption of $\D$, given two different placements of $\D$, their boundaries have constant many intersections.
This implies that any arrangement of $b$ copies of $\D$ have $O(b^2)$ cells, which further implies that $v=O(b^2)$.
To construct a $1/r$-approximation for $X$ with respect to $\calU_b$, we apply the following fact proved in \cite{de2009covering}:
Let $A$ be a $1/(r')$-approximation for $X$ with respect to the ranges $\mathcal{V}$,
then $A$ is a $1/r$-approximation for $X$ with respect to $\calU_b$.

According to Lemma~\ref{lemma:app}, we can construct a $1/(r')$-approximation for $X$ of size $O(r'^2\log r')$, in $O(((r')^{2}\log(r'))^6|X|)$ time. Thus we get (2).
\end{proof}

Next, we show an exact algorithm for computing $\MaxCRS(X,m)$.

\begin{lemma}\label{lemma:ext_exact}
Assume that $X$ is a set of weighted points.
We can compute the exact solution to $\MaxCRS(X,m)$ in $O(|X|^{2m+1})$ time.
%In addition, when $m=1$, we can compute the exact solution to $\MaxCRS(X,m)$ in $O(|X|^2)$ time.
%we can compute the exact solution for $\calA$ in $O((\frac{k}{\varepsilon})^{O(\sqrt{k})})$ time.
\end{lemma}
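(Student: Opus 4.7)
The plan is to reduce $\MaxCRS(X,m)$ to a brute-force enumeration over a polynomial-size family of ``canonical'' placements, and then evaluate each configuration in $O(|X|)$ time.

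First, I would prove a canonical-placement lemma: there exists an optimal solution $\{\D_1,\ldots,\D_m\}$ in which every $\D_i$ that covers at least one point of $X$ has $\partial \D_i$ passing through at least two points of $X$. The argument exploits the two translational degrees of freedom of each copy. Start with any optimal solution and pick any $\D_i$ that covers a point. Translate $\D_i$ continuously in some fixed direction: the total covered weight is a piecewise-constant function that only changes when a point of $X$ crosses $\partial \D_i$. Translate until a point first lies on $\partial \D_i$; since we can always choose the direction so that the first event is either a point entering $\D_i$ or a currently covered point reaching $\partial \D_i$ (not leaving), coverage does not decrease. Then translate along the one-dimensional locus that keeps this first point on $\partial \D_i$, and repeat the argument to pin a second point on $\partial \D_i$. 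Copies covering no point can be discarded (they contribute nothing).

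Second, I would bound the number of canonical positions per copy. Fix a pivot inside $\D$. For a given point $p\in X$, the locus of pivot positions such that $p\in \partial \D$ is a translate of $\partial \RD$. For two given points $p,q\in X$, the canonical positions with $p,q\in\partial\D$ are the intersections of two such translates. By the geometric assumptions on $\D$ (C-3, together with the bounded-degree-arc remark), the boundaries of two copies of $\RD$ intersect in $O(1)$ points, so each pair $(p,q)$ yields $O(1)$ pivot positions. This gives at most $O(|X|^2)$ canonical positions per copy.

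Third, I would enumerate all $m$-tuples of canonical positions, producing $O(|X|^{2m})$ candidate configurations. For each configuration, the total covered weight is computed in $O(m|X|)=O(|X|)$ time (treating $m$ as a fixed parameter as elsewhere in the paper) by testing each point of $X$ against each of the $m$ copies using the $O(1)$-time membership test implied by the bounded-degree boundary. Returning the best configuration gives the exact optimum in $O(|X|^{2m+1})$ time.

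The main obstacle is making the canonicalization rigorous in degenerate cases: when the permissible translation directions are blocked on both sides, or when several points of $X$ reach $\partial \D_i$ simultaneously. These are handled by a general-position assumption on $X$ (or a symbolic perturbation) together with careful tie-breaking on the translation direction. A secondary technical point is confirming that the $O(1)$-intersection bound between two loci really follows from the stated bounded-degree-arc hypothesis; this uses Bezout-type counting on the polynomial arcs comprising $\partial \D$ and $\partial \RD$.
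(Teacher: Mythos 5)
Your proposal follows essentially the same route as the paper: reduce to a polynomial family of canonical (the paper says ``critical'') placements anchored by points of $X$, observe that each pair of points determines $O(1)$ placements by the bounded-degree-arc assumption so that there are $O(|X|^2)$ candidates per copy, and then brute-force over all $m$-tuples with an $O(|X|)$ evaluation per tuple. One correction is needed in your canonical-placement lemma, however: it is not true that every copy covering at least one point can be translated so that its boundary passes through \emph{two} points of $X$. If a copy covers exactly one point $p$ and, as you slide the pivot along the locus keeping $p\in\partial\D$ (a translate of $\partial\RD$, which is a closed curve), no second point of $X$ ever reaches the boundary, then your second pinning step has nothing to pin and the procedure stalls; as written, your enumeration over two-point placements could miss such a copy entirely. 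The paper avoids this by defining critical copies as those that either contain a single point or have two points on their boundary, which adds only $O(|X|)$ single-point-anchored candidates and leaves the $O(|X|^2)$ count and the $O(|X|^{2m+1})$ bound unchanged. With that case added, your argument matches the paper's; your extra care about degeneracies, the Bezout-type justification of the $O(1)$ intersection count, and the explicit membership test are refinements the paper leaves implicit.
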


\begin{proof}
In the optimal solution of $\MaxCRS(X,m)$, we can always choose those copies of $\D$ so that each of them contains a single point or contains two points on their boundary. We call these copies the critical copies.
According to our assumption, the number of critical copies is $O(|X|^2)$, and these copies can be enumerated
in $O(|X|^2)$ time.
For each critical copy $\D_i$, we compute and store in memory the list of points $L_i$ that covered by $\D_i$.
Then, to compute the exact solution of $\MaxCRS(X,m)$, we enumerate all possible combination of $m$ critical copies and find the optimum one.
Since the points covered by each copy is stored, the points covered by the union of these copies can be computed in $O(|X|)$ time. Thus the running time is $O(|X|^{2m+1})$.
\footnote{The running time can be improved to $O(|X|^2)$ time for $m=1$. For $m=1$, we do not need to store the list $L_i$ for each $i$, and we can compute the summation of the weights in $L_i$ in amortized $O(1)$ time. (See \cite{agarwal2002translating} or \cite{de2009covering} for the details). We do not apply this optimization.}
\end{proof}

\subsection{Analysis of running time}

Now we provide some details and the running time analysis of Algorithm~\ref{algo:mainalgo3}.

Recall that $r_\varepsilon = O(\varepsilon^{-3})$.
For each non-empty cell $\cell$, we compute $\Ac$, which is a $1/r_{\varepsilon^2/9}$-approximation of
$\Pc$ with respect to $\calU_1$. This costs
\[O\left(n_cr_{\varepsilon^2/9}^4\log^2(r_{\varepsilon^2/9})\right)=O(n_c\varepsilon^{-25})\]
time according to Lemma~\ref{lemma:ext_app_size_time}~(1);
and the size of $\Ac$ is
\[r_{\varepsilon^2/9}^2\log(r_{\varepsilon^2/9})=O(\varepsilon^{-13}).\]

Then, we compute $\MaxCRS(\Ac,1)$ and use it as the $(1-\varepsilon^2/9)$ approximation of $\MaxCRS(\Pc,1)$.
This costs
\[O(|\Ac|^3)=O(\varepsilon^{-39})\]
time according to Lemma~\ref{lemma:ext_exact}.
So the first inside loop costs $O(n\varepsilon^{-39})$ time.

For each cell $\cell$ in $\cell_1,\ldots,\cell_m$, we compute a $1/r_{\varepsilon/3}$-approximation $\Ac'$ of
$\Pc$ with respect to $\calU_b$. Recall that $b=O(\varepsilon^{-2})$. The time for computing $\Ac'$ is
\[O\left(n_c\times r_{\varepsilon/3}^{12}b^{24}\log^6(r_{\varepsilon/3}b^2)\right)=O(n_c\varepsilon^{-36-48-1})=O(n_c(\frac{1}{\varepsilon})^{O(1)})\]
according to Lemma~\ref{lemma:ext_app_size_time}~(2).
Therefore, the total time for computing $\Ac'$ for $c\in \{\cell_1,...,\cell_m\}$ is $O((\frac{1}{\varepsilon})^{O(1)}n)$.
Note that $\Ac'$ is of size
\[O\left(r_{\varepsilon/3}^2b^4\log(r_{\varepsilon/3}^2b^4)\right)=O(\varepsilon^{-15}).\]
Therefore, the total time for computing $\{F(c_i,k): 1\leq i\leq m, 1\leq k\leq b\}$
is $O(m(\varepsilon^{-15})^{2b+1})$ according to Lemma~\ref{lemma:ext_exact}.

The overall running time is $O(n(\frac{1}{\varepsilon})^{O(1)}+\frac{m}{\epsilon}\log m + m(\frac{1}{\varepsilon})^{\Delta_2})$,
where $\Delta_2=O(\min(m,\frac{1}{\varepsilon^2}))$.
The second term comes from the greedy algorithm.

\section*{Reference}

\bibliographystyle{elsarticle-num}
\bibliography{TCS-cover}

\appendix

\section{The values $\F(\cell_{i},k)$ may not be concave}

\begin{figure}[h]
	\centering
	\includegraphics[width=0.8\textwidth]{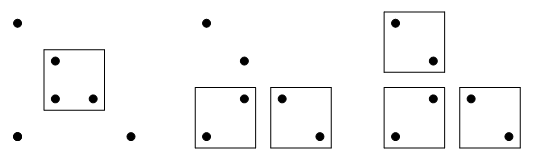}
	\caption{$\F(\cell_{i},k)$ may not be concave: $\F(\cell_{i},1)=3$, $\F(\cell_{i},2)=4$, $\F(\cell_{i},3)=6$.}
	\label{fig:convex}
\end{figure}

\end{document}